\newtheorem{theorem}{Theorem}[section]
\newtheorem{lemma}[theorem]{Lemma}
\newtheorem{definition}{Definition}[section]
\newcommand{\Z}{{\mathbb{Z}}}
\newcommand{\PP}{{\mathbb{P}}}
\def\Pois{{\operatorname{Pois}}}
     \DeclareMathOperator{\cov}{Cov}
\providecommand{\extr}{\operatorname*{extr}}
\DeclareSymbolFont{bbold}{U}{bbold}{m}{n}
\DeclareSymbolFontAlphabet{\mathbbold}{bbold}
\providecommand*{\toclevel@author}{999}
  \providecommand*{\toclevel@title}{0}
\newcommand{\numberextremepoints}{N}
\newcommand{\dimensionofcorrelationspace}{M}
\newcommand{\dimensionofmpp}{J}
\newcommand{\corr}{\text{corr}}
\newcommand{\MMhigherdimenprobdistr}{p\,(i_1,\dots,i_\dimensionofmpp)}
\def\eqod{{\stackrel{{\cal D}}{=}}}
\begin{document}
	\title{Correlated Multivariate Poisson Processes and Extreme Measures}
	\author[1]{Michael Chiu}
	\author[2]{Kenneth R. Jackson\thanks{This research was supported in part by the Natural Sciences and Engineering Research Council (NSERC) of Canada}}
	\affil{Department of Computer Science, University of Toronto}
	\author[3]{Alexander Kreinin\thanks{Corresponding author.}}
	\affil{Quantitative Research, Risk Analytics, IBM Canada}
	\affil[ ]{\texttt{\{chiu,krj\}@cs.toronto.edu, alex.kreinin@ca.ibm.com}}
	\date{August 29, 2017}

	\maketitle

	\abstract{Multivariate Poisson processes have many important applications in Insurance, Finance, and many other areas of Applied Probability. In this paper we study the backward simulation approach to modelling multivariate Poisson processes and analyze the connection to the extreme measures describing the joint distribution of the processes at the terminal simulation time.}

	\section{Introduction}
\label{sec:intro}
Analysis and simulation of dependent Poisson processes is an important problem having many applications in Insurance, Finance, Operational Risk modelling and many other areas (see \cite{Aue}, \cite{Bock2}, \cite{Chav},  \cite{kreinin2}, \cite{EmbPuc}, \cite{Panj}, \cite{Shev} and references therein). In the modelling of multivariate Poisson processes, the specification of the dependence structure is an intriguing problem. In some applications, such as Operational Risk, the realized correlations between components of multivariate Poisson Processes exhibit negative correlations that cannot be ignored, as exemplified in the correlation matrix below.
\begin{equation*}
\begin{bmatrix}
1.0 & 0.14 & 0.29 & 0.32 & 0.15 & 0.16 & 0.03  \\
0.14 & 1.0 & 0.55 & -0.12 & 0.49 & 0.52  & -0.16   \\
0.29 & 0.55 & 1.0 & 0.11 & 0.27 & 0.17 & -0.31  \\
0.32 & -0.12 & 0.11 & 1.0 & -0.12 & -0.23 & 0.19   \\
0.15 & 0.49 & 0.27 & -0.12 & 1.0 & 0.49 & -0.17   \\
0.16 & 0.52 & 0.17 & -0.23 & 0.49 & 1.0 & -0.02   \\
0.03 & -0.16 & -0.31 & 0.19 & -0.17 & -0.02 & 1.0   \\
\end{bmatrix}
\end{equation*}

In the literature, several different bivariate processes with Poisson marginal distributions are available for applications in actuarial science and quantitative risk management. One of the most popular models is the common shock model \cite{CSM} where several common Poisson processes drive the dependence between the components of the multivariate Poisson process. The resulting correlation structure is time invariant and cannot exhibit negative correlations in this case.  

An alternative, more flexible approach to this problem is based on the Backward Simulation 
(BS) introduced in \cite{kreinin} for the bivariate Poisson processes. The BS of correlated Poisson processes 
and  an approach to the calibration problem using transformations of Gaussian variables was proposed in
\cite{kreinin2}. In \cite{kreinin}, 
the idea of BS was extended to the class of multivariate processes containing both Poisson and Wiener components. 
It was also proved that the linear time structure of correlations is observed both in the Poisson and 
the Poisson-Wiener model. Further steps in the bivariate case were proposed in \cite{TBak} where the 
BS was combined with copula functions. This method allows one to extend the correlation pattern by using 
the Marshall-Olkin type copula functions that are simple to simulate.

In this paper, we continue the analysis and development of the BS method for the class of multivariate Poisson processes. By the multivariate Poisson process, we understand any vector-valued process such that all its components are (single-dimensional) Poisson processes. The idea of our approach is to use the relationship between the extreme measures describing the joint distribution with maximal or 
minimal correlation coefficient of the components of the multivariate process at the terminal simulation time and the time structure of correlations. We describe the class of admissible correlation structures given parameters of the marginal Poisson processes and exploit convex combinations of the extreme measures to represent the multivariate Poisson process with given correlations of the components. We believe that our approach can simplify the solution to the calibration problem and extend the variety of the correlation patterns of the multivariate Poisson processes. 

There is a connection between our problem and the Optimal Transport literature (see \cite{MKP1} for a general overview of the area and \cite{MKP_ruschendorf_1,MKP_ruschendorf_2} for a more probabilistic focus). Our computation of the extreme measures at the terminal simulation time can be viewed as a solution to a special multi-objective Monge-Kantorovich Mass Transportation Problem (MKP), with quadratic cost functions. However, this connection is not discussed in the present paper. In this paper, we are mainly concerned with the construction of the multivariate Poisson processes.

The rest of the paper is organized as follows. In Section \ref{chiu_sec:em_monot_2d} we begin 
by discussing the background and motivation for the 2-dimensional problem. We introduce extreme measures and generalize the 
results of the bivariate problem to higher dimensions in Section \ref{sec:ejd_hd}.  In Section~\ref{sec:algor} we describe a general algorithm for the computation of the joint distribution of the extreme measures.
Section \ref{sec:calibration} is concerned with the calibration problem. We   discuss 
the simulation problem in Section \ref{sec:simulation} and propose a Forward-Backward extension 
of the BS method. The paper is concluded with some directions for future research in Section \ref{sec:chiu_future_work}.
 	\section{Extreme Measures and Monotonicity of the Joint Distributions }
\label{chiu_sec:em_monot_2d} 
We begin with a description of the Common Shock Model (CSM) \cite{CSM} and the motivation of the approach proposed in \cite{kreinin2}. Afterwards, we discuss the results obtained in \cite{kreinin} for the case of two Poisson processes and describe the computation of the extreme measures in the case $J=2$. 

The CSM has become very popular within actuarial applications as well as in Operational Risk modeling \cite{Dian}. This model is based on the following idea. Suppose we want to construct two dependent Poisson processes. Consider three independent Poisson processes $\nu^{(1)}_t$, $\nu^{(2)}_t$, $\nu^{(3)}_t$ with the  intensities $\lambda_1$, $\lambda_2$, $\lambda_3$. Let $N^{(1)}_t = \nu^{(1)}_t + \nu^{(2)}_t$ and  $N^{(2)}_t = \nu^{(3)}_t + \nu^{(2)}_t$, which are also Poisson processes, formed by the superposition operation. Then, the Poisson processes $N^{(1)}_t$ and $N^{(2)}_t$ are dependent with the Pearson correlation coefficient
\begin{equation*}
    \rho(N^{(1)}_t,N^{(2)}_t) = \frac{\lambda_2}{\sqrt{(\lambda_1 + \lambda_2)(\lambda_2 + \lambda_3)}}.
\end{equation*}
Clearly, the correlation coefficient can only be positive.

A more advanced approach to the construction of negatively correlated Poisson processes is based on the idea of the backward simulation of the Poisson processes  \cite{kreinin}. 
The conditional distribution of the arrival moments of a Poisson process, conditional on the value of the process at the terminal simulation time, $T$, is uniform. Then, using a joint distribution maximizing or minimizing correlation between the components at time, $T$, one can construct a Poisson process with a linear time structure of correlations in the interval $t\in [0, T]$. Thus, the problem of constructing the $2$-dimensional Poisson process with the extreme correlation of the components at time $T$ is reduced to that of random variables having Poisson  distributions with the parameters $\lambda T$ and $\mu T$, where $\lambda$ and $\mu$ are parameters of the processes. It is not difficult to see that maximization (minimization) of the correlation coefficient of two random variables (r.v.), $X$ and $Y$, given their marginal distributions, is equivalent to maximization (minimization) of $\mathbb{E}[XY]$, if the r.v. have finite first and second moments and positive variances.

The admissible range of the correlation coefficients can be computed using the Extreme Joint Distributions (EJD) Theorem in \cite{kreinin} (see Theorem~\ref{thm:ejd_2d} in this section). The key statement, the characterization of the EJDs, is equivalent to the Frechet-Hoeffding theorem \cite{Fre} for distributions on the positive quadrant of the two-dimensional lattice, $\Z^{(2)}_{+}=\{(i, j): i,j=0,1,2,\dots\}$. However, taking into account the numerical aspect of the problem, we prefer to use equations, derived in \cite{kreinin}, written in terms of the probability density function, not in terms of the cumulative distribution function. Given marginal distributions of the non-negative, integer-valued random variables $X_1$ and $X_2$, with finite first and second moments, there exist two joint distributions, $F^{*}(i, j)$ and $F^{**}(i, j)$ minimizing and maximizing the correlation, $\rho=\corr(X_1, X_2)$, respectively.
\begin{definition}  \label{def:chiu_extrem_measure_2d} The probability measures 
corresponding to the joint distributions $F^{*}$ and $F^{**}$  are called extreme probability measures.
\end{definition}
The EJD theorem in \cite{kreinin} allows one to construct the extreme measures $p^*$ and $p^{**}$, given marginal distributions of $X_1$ and
$X_2$, with the minimal negative correlation $\rho^*$ and maximal positive correlation $\rho^{**}$, respectively.  The extreme correlation coefficient uniquely defines the extreme measure.
   
Given a probability measure, $p$, corresponding to the joint 
distribution of the vector $(X_1, X_2)$ on  $\Z^{(2)}_{+}$ we define a functional $f_\rho(p)=\corr(X_1, X_2)$. Then we have $\rho^* =f_\rho( p^*) ,  \text{and}\,\,    \rho^{**} = f_\rho( p^{**})$. This functional $f_{\rho}$ preserves the convex combination property. Indeed, taking a convex combination of the extreme measures, 
$p= \theta p^* + (1-\theta)p^{**}$, $(0\le \theta\le 1)$, we obtain   
\begin{equation}
    f_\rho (p) = \theta f_\rho(p^*) + (1-\theta) f_\rho(p^{**}).  \label{eq:lin_rho}
\end{equation}

Thus, for any $\rho \in [\rho^*,\rho^{**}]$, we can find a probability measure $p$ for a joint distribution of the vector
 $(X_1,X_2)$ such that $f_\rho(p) = \corr(X_1,X_2) = \rho$ and $p$ has the required marginal distributions for $X_1$ and $X_2$.

\section*{Connection to Optimization Problem}
Computation of the  extreme measures in  the case $J=2$ was accomplished in \cite{kreinin} using a very efficient EJD algorithm having linear complexity with respect to the number of points in the support of the marginal distributions. It is interesting to note that this algorithm is applicable to a more general class of linear optimization problems on a lattice. In the case $J>2$, the corresponding optimization problem becomes multi-objective with $M=J(J-1)/2$ objective functions. Let us first recall the case $J=2$.

Let $(X_1, X_2)$ be a random vector with support $\mathbb{Z}^{(2)}_{+}$ and given marginal probabilities $\PP(X_1=i)=P_1(i)$ and $\PP(X_2=j)=P_2(j)$. Denote
\begin{equation*}
    h(p) := \mathbb{E}[X_1 X_2] = \sum^\infty_{i=0}\sum^\infty_{j=0} ij\,p(i,j)
    \label{eq:objective_fn_2D} 
\end{equation*} 
where $p(i,j)=\PP( X_1 = i, X_2 = j)$.
The measure $p^{**}$ is the solution to the problem $ h(p)\to \max$ with the constraints shown below in (\ref{eq:optimization_problem_2D}) on the marginal distributions of $p^{**}$. Similarly, the extreme measure $p^{*}$ is the solution to the optimization problem $ h(p)\to \min$ with the same constraints \cite{kreinin}. For the sake of brevity, we write these two problems as
\begin{align}
    & \quad h(p) \rightarrow \extr  \label{eq:optimization_problem_2D} \\
    \text{subject to} \nonumber \\   
    & \quad \sum^\infty_{j=0} p(i,j) = P_1(i), \quad i=0,1,\dots \nonumber \\
    & \quad \sum^\infty_{i=0} p(i,j) = P_2(j), \quad j=0,1,\dots \nonumber \\ 
   & \quad p(i,j) \geq 0 \quad i,j=0, 1, 2, \dots \nonumber
\end{align}
where  $\sum_{i=0}^\infty\limits P_1(i)=\sum_{j=0}^\infty\limits P_2(j)=1$.
The symbol $\extr$ denotes $\max$ in the case of measure $p^{**}$ and $\min$ in the case of $p^*$. It is not difficult to see that Problem~(\ref{eq:optimization_problem_2D}) is infinite dimensional; its numerical solution requires construction of the compact subset of the lattice for the computation of the approximate solution  \cite{kreinin}.

A solution to the infinite dimensional optimization problem (\ref{eq:optimization_problem_2D}) is the joint distribution describing one of the extreme measures, given the marginal distributions of the random variables. The EJD algorithm discussed in \cite{kreinin} allows one to find a unique solution to the problem to any user specified accuracy. Taking the marginal distributions to be Poissonian, we find the extreme measures, $p^*$ and $p^{**}$, describing the joint distribution of the processes, $N_T=(N^{(1)}_T,  N^{(2)}_T)$ with the extreme correlation of the components at time $T$. 

The convex combination of these measures can be calibrated to the desired value of the correlation coefficient, $\rho$. Then, applying the BS method we obtain the sample paths of the processes. Note that the EJD algorithm is applicable to a more general class of linear optimization problems: there is no need to assume normalization conditions as long as  $P_1(i)\ge 0$ and $P_2(j)\ge 0$ for all $i \geq 0$ and $j \geq 0$ and these functions are integrable: $\sum_{i=0}^\infty\limits P_1(i)< \infty$ and $\sum_{j=0}^\infty\limits P_2(j)< \infty$.

\section*{Monotone Distributions}
\label{sec:montone_distr}
Extreme measures are closely connected to the monotone distributions in the case $J=2$. It was proved in \cite{kreinin} that
the joint distribution is comonotone in the case of maximal correlation and antimonotone in the case of minimal (negative) correlation. Let us  review  the 
properties of extreme measures used in what follows. 

Consider a set $\mathcal{S} = \{ s_n \}_{n \geq 0}$, where $s_n = (x_n,y_n) \in \mathbb{R}^2$. Define the two subsets $\mathcal{R}_+ = \{(x,y) \in \mathbb{R}^2 : x \cdot y \geq 0 \} \qquad \text{and} \qquad \mathcal{R}_- = \{(x,y) \in \mathbb{R}^2 : x \cdot y \leq 0 \}$.
\begin{definition}
    A set $\mathcal{S} = \{s_n\}_{n\geq 0} \subset \mathbb{R}^2$ is comonotone if $\forall \, i, j$,  $s_i - s_j \in \mathcal{R}_+ $. Similarly, $\mathcal{S}$ is antimonotone if $\forall \, i, j$,  $s_i - s_j \in \mathcal{R}_-$.
    \label{def:monotone_sets_2d}
\end{definition}
\begin{definition}[Monotone distributions]
We say that a distribution $P$ is comonotone (antimonotone) if its 
support is a comonotone (antimonotone) set.
\end{definition}
It is also useful to recall the following classical statement on monotone sequences of real numbers, usually attributed to Hardy.\footnote{
This result motivates and is used in the proof of Theorem 2.2 in \cite{kreinin} and provides an explanation as to why one coordinate of the support always increases (decreases) in the comonotone (antimonotone) case.
}

Consider two vectors $x \in \mathbb{R}^N$ and $y \in \mathbb{R}^N$. Their
inner product is
\begin{equation*}
    \langle x, y \rangle := \sum^N_{k=1} x_k y_k
\end{equation*}
Denote by  $\mathfrak{S}_N$ the set of all permutations of $N$ elements.
\begin{lemma}\label{lem:H}
    For any monotonically increasing sequence, $x_1 \leq x_2 \leq \dots \leq x_N$ and a vector $y \in \mathbb{R}^N$, there exist permutations $\pi_+$ and $\pi_-$ solving the optimization problems
\begin{equation*}
  \langle x, \pi_+ y \rangle = \max_{\pi \in \mathfrak{S}_N} \langle x, \pi y \rangle
    \end{equation*}
    and
    \begin{equation*}
\langle x, \pi_- y \rangle = \min_{\pi \in \mathfrak{S}_N} \langle x, \pi y \rangle
    \end{equation*}
    The permutations $\pi_+$ and $\pi_-$ sort vectors in ascending and descending order, respectively.
\end{lemma}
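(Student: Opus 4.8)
The plan is to prove this classical rearrangement statement (the Hardy--Littlewood--P\'olya inequality) by a finite exchange argument. Since $\mathfrak{S}_N$ is finite, the map $\pi \mapsto \langle x, \pi y\rangle$ certainly attains a maximum and a minimum, so optimizers exist; the real work is to show that among the maximizers one may choose a permutation that lists $y$ in ascending order, and among the minimizers one that lists $y$ in descending order. Throughout I would write $(\pi y)_k = y_{\pi(k)}$ and call a pair of positions $i<j$ an \emph{inversion} of $\pi$ when $(\pi y)_i > (\pi y)_j$.

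The computation that drives everything is the effect of a single transposition. If $i<j$ and $\pi' = \pi\circ(i\,j)$, then, with $a=(\pi y)_i$ and $b=(\pi y)_j$,
\[
\langle x, \pi' y\rangle - \langle x, \pi y\rangle = (x_i b + x_j a) - (x_i a + x_j b) = (x_i - x_j)(b-a).
\]
Since $x_1\le\cdots\le x_N$ forces $x_i - x_j\le 0$, swapping an inversion (where $b-a<0$) does not decrease the objective, and swapping a non-inversion does not increase it. So, starting from any maximizer and repeatedly swapping an \emph{adjacent} inversion --- one exists whenever $(\pi y)_1\le\cdots\le(\pi y)_N$ fails, and swapping it lowers the total inversion count by exactly one --- I reach, after finitely many steps, a maximizer with no inversions, i.e. one that sorts $y$ ascending; that permutation is $\pi_+$. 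For $\pi_-$ I would either rerun the argument swapping ``anti-inversions'' instead, or more cleanly invoke $\min_\pi\langle x,\pi y\rangle = -\max_\pi\langle x,\pi(-y)\rangle$ and note that sorting $-y$ ascending is the same as sorting $y$ descending.

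I do not expect a genuine obstacle here; the only point needing care is termination of the bubble-sort, which I would formalize by induction on the inversion count $\#\{(i,j): i<j,\ (\pi y)_i>(\pi y)_j\}$, using that an adjacent transposition applied at an adjacent inversion decreases this count by precisely $1$. Ties among the $x_k$ or among the $y_k$ require no special treatment: the ``ascending'' and ``descending'' arrangements of $y$ need not then be unique, but any such arrangement is optimal by the same sign inequality above.
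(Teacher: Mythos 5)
Your proof is correct. Note, however, that the paper itself offers no proof of Lemma~\ref{lem:H}: it is stated as a classical result ``usually attributed to Hardy'' and is only cited as motivation (with a pointer to its role in the proof of Theorem~2.2 of the reference \cite{kreinin}), so there is no argument in the paper to compare yours against. Your exchange argument is the standard one and is complete: the transposition identity $(x_i-x_j)(b-a)$ with $x_i-x_j\le 0$ correctly shows that removing an adjacent inversion never decreases the objective, the induction on the inversion count guarantees termination of the bubble-sort at an ascending arrangement that is still a maximizer, and the reduction $\min_\pi\langle x,\pi y\rangle=-\max_\pi\langle x,\pi(-y)\rangle$ cleanly handles $\pi_-$. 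Your remark about ties is also the right reading of the lemma's existential phrasing.
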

Lemma~\ref{lem:H}  motivates the introduction of monotone distributions in the $2$-dimensional case.   
\begin{theorem}[\cite{kreinin}]
    \label{thm:ejd_2d}
    The joint distribution $p^{**}$ for $X_1$ and $X_2$ having 
maximal positive correlation coefficient $\rho^{**}$, given marginal distributions $F_1(i)$ and $F_2(j)$, is comonotone. The probabilities 
$p^{**} (i,j)=\mathbb{P}(X_1=i, X_2=j)$ satisfy the equation
    \begin{align}
        p^{**} (i,j) & = [\min(F_1(i),F_2(j)) - \max(F_{1}(i-1),F_2(j-1))]^+ \quad i,j = 0,1,2,\dots
        \label{eq:mr_J}
    \end{align}
    where $[\,x\,]^+ = \max(x,0)$ and $F_i(\cdot)$ denote the marginal CDFs, with $F_i(-1) = 0$.
    
    The joint distribution $p^*$ for $X_1$ and $X_2$ having minimal negative correlation coefficient $\rho^*$ is antimonotone. In this case
    \begin{equation}
        p^*(i,j) = [\min(F_1(i),\bar{F}_2(j-1)) - \max(F_1(i-1),\bar{F}_2(j))]^+ \quad i,j = 0,1,2,\dots 
        \label{eq:max_negat}
    \end{equation} 
    where $\bar{F}_i(j) = 1 - F_i(j)$ and $\bar{F}_i(-1)=1$.
\end{theorem}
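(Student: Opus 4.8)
The plan is to reduce both claims to the classical Fréchet–Hoeffding characterization of extreme joint distributions, translated from CDFs to the probability mass function on the lattice $\Z^{(2)}_{+}$, and to use Lemma~\ref{lem:H} to pin down the monotone structure of the support. I will treat the comonotone (maximal correlation) case first and then obtain the antimonotone case by the reflection $X_2 \mapsto -X_2$ (equivalently, by reversing the order on the second coordinate), which swaps $F_2$ for $\bar F_2$ and turns $\max$ into $\min$.

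For the comonotone case, I would first recall that maximizing $\rho(X_1,X_2)$ over couplings with fixed marginals $F_1,F_2$ is equivalent to maximizing $\EE[X_1X_2] = h(p)$, since the marginals (hence $\EE[X_i]$ and $\var(X_i)$) are fixed. Next, using Lemma~\ref{lem:H} applied to finite truncations of the lattice, I would argue that any optimal coupling must be supported on a comonotone set: if the support contained two points $s_i=(x_i,y_i)$ and $s_j=(x_j,y_j)$ with $x_i<x_j$ but $y_i>y_j$, then shifting a small amount of mass to the "sorted" configuration $(x_i,y_j),(x_j,y_i)$ strictly increases $h(p)$ while preserving the marginals — contradicting optimality. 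This establishes that $p^{**}$ is comonotone. Then I would invoke the standard fact that the unique distribution on $\Z^{(2)}_+$ with marginals $F_1,F_2$ whose support is comonotone is the one coupled through a common uniform: $X_1 = F_1^{-1}(U)$, $X_2 = F_2^{-1}(U)$, whose CDF is $H(i,j)=\min(F_1(i),F_2(j))$. Finally, converting from CDF to PMF via the inclusion–exclusion identity $p^{**}(i,j) = H(i,j) - H(i-1,j) - H(i,j-1) + H(i-1,j-1)$ and simplifying the resulting combination of $\min$'s — using that $\min(a,b) - \min(a,c) - \min(d,b) + \min(d,c)$ collapses to $[\min(a,b) - \max(d,c)]^+$ when $a\ge d$ and $b\ge c$ (here $a=F_1(i)\ge F_1(i-1)=d$, $b=F_2(j)\ge F_2(j-1)=c$) — yields exactly \eqref{eq:mr_J}. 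The convention $F_i(-1)=0$ handles the boundary $i=0$ or $j=0$.

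For the antimonotone case, I would apply the comonotone result to the pair $(X_1, -X_2)$, or more carefully to $(X_1, X_2')$ where $X_2'$ has the order-reversed marginal; the CDF of $-X_2$ at level $j$ is $\bar F_2(j-1) = \PP(X_2 \ge j)$. Substituting into the comonotone PMF formula and unwinding the index shift produces \eqref{eq:max_negat}, with the dual boundary convention $\bar F_i(-1)=1$. One must check that this reflection genuinely converts "maximize $\EE[X_1X_2]$" into "minimize," which is immediate, and that the support of the reflected coupling, when mapped back, is antimonotone in the sense of Definition~\ref{def:monotone_sets_2d}.

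The main obstacle I anticipate is not the conceptual reduction but the careful bookkeeping at the lattice boundary and the algebraic collapse of the four-term $\min/\max$ expression into the single bracketed form with $[\,\cdot\,]^+$: one has to verify that the naive inclusion–exclusion expression is always nonnegative and equals the stated positive-part formula, checking the degenerate cases where $F_1(i) = F_1(i-1)$ (a gap in the support of $X_1$) or where the $\min$ and $\max$ cross. A secondary subtlety is justifying the "shift mass to the sorted configuration" exchange argument rigorously on the infinite lattice — I would do this by truncating to a large finite box, applying Lemma~\ref{lem:H} there, and passing to the limit using tightness from the finite second moments. Establishing uniqueness of the comonotone coupling (so that "the" extreme measure is well-defined) is routine once comonotonicity of the support is known, since a comonotone support is the graph of a monotone relation and the marginals then determine the joint law.
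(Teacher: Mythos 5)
Your proposal is correct, but it takes a genuinely different route from the one the paper uses. The paper does not prove Theorem~\ref{thm:ejd_2d} directly (it is imported from \cite{kreinin}); the argument it does supply is the sketch for the general Theorem~\ref{thm:ejd_nd}, which contains the $J=2$ case. That argument is an empirical-measure/sorting construction: draw a large i.i.d.\ sample from the coupling, sort by the first coordinate so that comonotone coordinates appear in ascending and antimonotone coordinates in descending order, record the first and last positions $I_k(m)$, $E_k(m)$ at which each value $m$ occurs in each coordinate, note that the empirical probability of $\{X_1=i,\,X_2=j\}$ is the normalized length $(\mathcal{R}-\mathcal{L})^+/N_S$ of the intersection of the corresponding index intervals, and pass to the limit by the strong law of large numbers to obtain the $[\min-\max]^+$ formula in one step. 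Your route instead goes through the Fr\'echet--Hoeffding upper bound on the CDF, $H(i,j)=\min(F_1(i),F_2(j))$, converts to the PMF by inclusion--exclusion, and collapses the four-term expression via the identity $\min(a,b)-\min(a',b)-\min(a,b')+\min(a',b')=[\min(a,b)-\max(a',b')]^+$ for $a\ge a'$, $b\ge b'$ (which does hold in all cases, including the degenerate ones you flag), with optimality of the comonotone support obtained by a local mass-exchange argument rather than by sorting; the antimonotone case follows by the reflection $X_2\mapsto -X_2$, whose CDF at level $-j$ is exactly $\bar F_2(j-1)$. What each approach buys: yours makes uniqueness and the connection to the classical Fr\'echet bounds transparent and needs almost no probabilistic machinery, at the cost of hand-verifying the algebraic collapse and the reflection bookkeeping; the paper's sorting argument yields the positive-part formula directly, treats comonotone and antimonotone coordinates symmetrically through the indicators $e_j$, and generalizes immediately to $J\ge 3$, where the CDF/inclusion--exclusion route would involve a $2^J$-term alternating sum that does not collapse as cleanly. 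Your exchange step is essentially the content of Lemma~\ref{lem:H}, which the paper invokes only as motivation, and your plan to justify it on the infinite lattice by truncation plus tightness is sound given the assumed finite second moments.
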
 
Theorem~\ref{thm:ejd_2d} is equivalent to the Frechet theorem in the case the  marginal distributions  are discrete.
 
The case of the Poisson marginal distributions is a particular case of Theorem~\ref{thm:ejd_2d}. This result is applicable to much more general classes of distributions. In particular, one can describe the joint probabilities corresponding to $p^*$ and $p^{**}$ in the case the components of the vector have a negative binomial distribution. The EJD algorithm for computation of the joint probabilities is also applicable to more general cases. If both marginal distributions have finite second moments, the joint distribution can be approximated to any user specified accuracy.
 
\section{Extreme Measures in Higher Dimensions}\label{sec:ejd_hd}
Let us now generalize the main result, Theorem~\ref{thm:ejd_2d},
discussed in Section~\ref{chiu_sec:em_monot_2d}.
We consider a random vector $\vec{X}=(X_1,\dots,X_J)$ on a positive quadrant of the $J$-dimensional  lattice, $\Z^{(J)}_+$. Each coordinate of $\vec{X}$ has a discrete distribution with the support $\mathbb{Z}_+$. We also assume that each random variable $X_k$, $k=1,2,\dots,J$, has finite second moment and its variance is positive. In this case, the correlation coefficients, $\rho_{k,l}=\corr(X_k, X_l)$, are defined for all $1\le k\le l\le J$.
We denote the marginal distribution of the r.v. $X_k$ by $F_k$:
$$ F_k(i)=\mathbb{P}(X_k\le i), \quad i\in \mathbb{Z}_+; \quad k=1, 2, \dots, J. $$ 
Let us now define the extreme measures on the $J$-dimensional lattice. If $J=2$, the extreme measures are described by the joint distribution maximizing and minimizing the correlation coefficient of $X_1$ and $X_2$; the corresponding probability density functions satisfy Theorem~\ref{thm:ejd_2d}. If the number of components $J\ge 3$, the definition of the extreme measure is less obvious. 

Denote the (joint) distribution function of $\vec{X}$ by $F(\vec{i})$:
$F(i_1, i_2, \dots, i_J)=\PP(X_1\le i_1, X_2\le i_2, \dots, X_J\le i_J)$ and the corresponding probability density function by $p(\vec{i})$. By $p_{k,l}(i_k, i_l)$ we denote the  probability density function of the $2$-dimensional projection, $(X_k, X_l)$ of  $\vec{X}$, $(1\le k < l\le J)$:
\begin{equation*}
	p_{k,l}(i_k,i_l) = P(X_k = i_k, X_l = i_l)
\end{equation*}
\begin{definition}
    We say that the density $p(\vec{i})$,
$$ p(i_1, \dots, i_J)=\mathbb{P}(X_1= i_1, \dots, X_J= i_J), \quad i_k\in  
\mathbb{Z}_+, \quad k=1, 2, \dots, J $$
determines an extreme measure on the $J$-dimensional lattice if and only if
for all  $k$ and $l$, $(1\le k\le l\le J)$, the associated density $p_{k,l}$ 
determines an extreme measure on $\Z^{(2)}_{+}$ in the sense of Definition~\ref{def:chiu_extrem_measure_2d}.
    \label{def:extr_meas_gen}
\end{definition}   
Our goal is to describe the extreme measures given the marginal distributions, $F_k$, and compute the associated extreme correlation matrices,
$\mathbf{\rho}=[\rho_{k,l}]$.  Let us first find the number of  extreme  measures.
\begin{lemma}\label{lem:numb_ed}
For any given set of marginal distributions, $F_k$,  on $\mathbb{Z_+}$
$(k=1, 2, \dots, J)$ the number of extreme measures is $N=2^{J-1}$.
\end{lemma}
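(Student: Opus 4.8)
The plan is to count, for each pair $(k,l)$, the two possible "signs" of the pairwise extreme measure — comonotone or antimonotone — and show that a consistent global choice corresponds to exactly $2^{J-1}$ configurations. By Definition~\ref{def:extr_meas_gen} and Theorem~\ref{thm:ejd_2d}, a density $p(\vec i)$ is extreme iff every $2$-dimensional projection $p_{k,l}$ is either comonotone (the $p^{**}$ case, "type $+$") or antimonotone (the $p^{*}$ case, "type $-$"). So associate to an extreme measure a symmetric "sign pattern" $\sigma_{k,l}\in\{+,-\}$ for $1\le k<l\le J$. The key structural fact to establish is that the sign pattern cannot be arbitrary: if $X_k$ and $X_l$ are comonotone (their supports lie on a monotone increasing curve) and $X_l$ and $X_m$ are comonotone, then $X_k$ and $X_m$ must be comonotone; comonotonicity composed with antimonotonicity gives antimonotonicity, and two antimonotone relations compose to comonotone. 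In other words the "sign" is multiplicative along chains.

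First I would make this transitivity precise. The crucial observation (traceable to Lemma~\ref{lem:H} and the structure in Theorem~\ref{thm:ejd_2d}) is that a comonotone pair $(X_k,X_l)$ is, up to the relevant order statistics, a monotone increasing deterministic function of a common uniform variable, and an antimonotone pair is a monotone decreasing one. Composition of monotone maps then yields the sign rule above. I would phrase this as: fix a reference coordinate, say $X_1$; then for each $k\ge 2$, the pair $(X_1,X_k)$ is either type $+$ or type $-$, giving $2^{J-1}$ choices for the vector $(\sigma_{1,2},\dots,\sigma_{1,J})$. Transitivity forces $\sigma_{k,l}=\sigma_{1,k}\cdot\sigma_{1,l}$ for all $k,l$, so the full sign pattern is determined by these $J-1$ binary choices. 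Conversely, each such choice is realizable: given the marginals $F_k$, couple all coordinates to a single uniform $U$ on $[0,1]$ by setting $X_k = F_k^{-1}(U)$ when $\sigma_{1,k}=+$ and $X_k = F_k^{-1}(1-U)$ when $\sigma_{1,k}=-$; this yields a well-defined joint distribution on $\Z_+^{(J)}$ whose every projection is one of the two extreme $2$-dimensional measures of Theorem~\ref{thm:ejd_2d}, hence an extreme measure in the sense of Definition~\ref{def:extr_meas_gen}.

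The last step is to check that distinct sign vectors give distinct measures, so that the count is exactly $2^{J-1}$ and not smaller. This follows because the sign pattern is recoverable from the measure: $\sigma_{1,k}$ is determined by whether $\corr(X_1,X_k)$ equals the maximal or the minimal admissible value for that pair (these are distinct since each $X_k$ has positive variance), or more robustly by whether the support of $p_{1,k}$ is comonotone or antimonotone. I would also note the mild degenerate case: if some $X_k$ is a.s. constant its variance is zero, excluded by hypothesis, so no collapsing occurs.

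The main obstacle is the transitivity/composition lemma — i.e. rigorously arguing that pairwise comonotonicity (resp. antimonotonicity) of the projections of a single joint distribution propagates to a consistent global sign, and in particular that one cannot have, say, $(1,2)$ and $(2,3)$ comonotone while $(1,3)$ is antimonotone. One has to be careful because in the discrete setting the "common uniform" representation requires working with generalized inverses and ties, so the monotone-function composition argument needs the order-statistics bookkeeping from Theorem~\ref{thm:ejd_2d} rather than a naive continuous-distribution argument. Everything else — realizability via the common-$U$ coupling and injectivity of the sign map — is routine once transitivity is in hand.
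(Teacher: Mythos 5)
Your proposal takes essentially the same route as the paper: fix $X_1$ as a reference coordinate and count the $2^{J-1}$ binary choices of comonotone versus antimonotone for each of $X_2,\dots,X_J$. In fact you are more thorough than the paper's own proof, which consists only of this counting step and leaves implicit the three points you explicitly identify (transitivity of the sign pattern, realizability of each configuration via a common-uniform coupling, and injectivity of the map from sign vectors to measures).
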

\begin{proof}
The proof of Lemma~\ref{lem:numb_ed} for $J=2$ is obvious. Let us prove  it
for $J\ge 3$.
For each $2$-dimensional  projection $(X_k, X_l)$, the corresponding joint distribution 
should be either comonotone or antimonotone.  Take the first r.v, 
$X_1$, and form the first group of random variables from the set
$X_2$, $X_3$, $\dots, X_J$, that are comonotone with $X_1$. Denote the number
of comonotone r.v.,\ by $J_c$. The number of r.v.\ antimonotone with $X_1$,  satisfies
$$J_a=J-1 - J_c.$$
The total number of partitions of the number $J-1$ in the additive form,
$ J-1= J_a + J_c$,
is $N=2^{J-1}$.  Clearly,  $N$ does not depend 
on the choice of  the  first r.v.
\end{proof}
Let us now introduce the  monotonicity  structure of the extreme  measures.
Take  the first r.v., $X_1$ and consider the r.v. $X_2$, $X_3$, $\dots X_J$.
Define the vector of binary variables $\vec{e^n}=(e_1, e_2, \dots, e_J)$ 
such that $e_1=0$, and for $j=2, 3, \dots, J$, $n=1,\dots,N$.
$$ e_j=
\begin{cases} 1,&\text{if $X_1$ and $X_j$ are antimonotone,}\cr
                      0,&\text{if $X_1$ and $X_j$ are comonotone.}\cr
\end{cases}
$$
We call $\vec{e^n}$ the monotonicity vector corresponding to the $n$-th extreme measure; its components are called monotonicity indicators. Figure \ref{chiu_fig:extremal_structure} illustrates this concept. In this example, all coordinates but the last are comonotone with the first r.v., $X_1$. The last coordinate, $X_J$ is antimonotone. The monotonicity indicators in this case are $e_k=0$, for $k=1, 2, \dots, J-1$, and $e_J=1$.

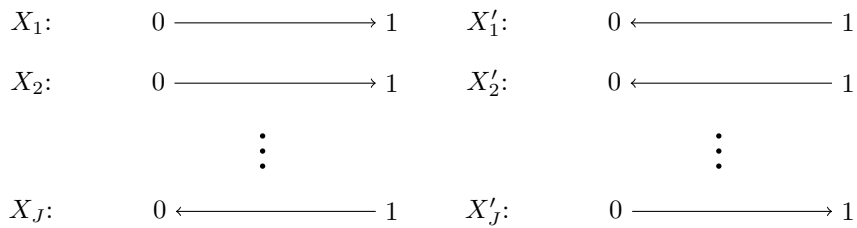
\begin{figure}[h]
\hspace{1cm}
\begin{tikzpicture}
\node (a) at (0,0) {$X_1$: \hspace{1cm} 0};
\node (b) at (4,0) {1};

\node (c) at (0,-0.8) {$X_2$: \hspace{1cm} 0};
\node (d) at (4,-0.8) {1};

\node (e) at (0,-2.5) {$X_J$: \hspace{1cm} 0};
\node (f) at (4,-2.5) {1};

\node (a2) at (6,0) {$X'_1$: \hspace{1cm} 0};
\node (b2) at (10,0) {1};

\node (c2) at (6,-0.8) {$X'_2$: \hspace{1cm} 0};
\node (d2) at (10,-0.8) {1};

\node (e2) at (6,-2.5) {$X'_J$: \hspace{1cm} 0};
\node (f2) at (10,-2.5) {1};		

\fill (2.3,-1.5) circle (1pt);
\fill (2.3,-1.7) circle (1pt);
\fill (2.3,-1.9) circle (1pt);

\fill (8.3,-1.5) circle (1pt);
\fill (8.3,-1.7) circle (1pt);
\fill (8.3,-1.9) circle (1pt);

\graph{(a)->(b)};
\graph{(c)->(d)};
\graph{(e)<-(f)};

\graph{(a2)<-(b2)};
\graph{(c2)<-(d2)};
\graph{(e2)->(f2)};

\end{tikzpicture}
\captionsetup{width=0.85\textwidth}
\caption{Monotonicity structure of an extreme measure. Each distribution is represented by an arrow having unit length. All arrows associated with $X_2$,   $\dots, X_{J-1}$ are oriented in the same direction as the arrow representing $X_1$. The last arrow pointing in the opposite direction indicates antimonotonicity of the random variables $X_1$ and $X_J$. The monotonicity structure on the right has all of its arrows reversed compared to that on the left. However, note that they both represent the same monotonicity structure.}
\label{chiu_fig:extremal_structure}
\end{figure} \subsection*{Optimization Problem: $J\ge 3$.}

Since each  $2$-dimensional  projection of the random vector $\vec{X}$ is associated with an extreme measure, the optimization problem in this case is multiobjective. The number of optimization criteria  is  $M=J(J-1)/2$, one for each pair of r.v.s\ $(X_i,X_j), \,\,\, 1 \leq i < j \leq J.$ The number of constraints is  equal to the number of marginal distributions, $J$. The variables in this problem are the probabilities $$ p(\vec{i})=\PP(X_1=i_1, X_2=i_2, \dots, X_J=i_J), \quad  i_j\in \mathbb{Z_+}, $$ and, therefore, must satisfy the inequalities $0\le p(\vec{i})\le 1$. 

Let us  define the set of integers 
$$\mathcal{I}_{k}=\{j: 1\le j\le J, \quad j\neq k, \}$$
and
$$\mathcal{I}_{k,l}=\{j: 1\le j\le J, \quad j\neq k, \quad j\neq l\}.$$
Then  the marginal probabilities, $P_k(i_k)$, can be written as
$$ p_k(i_k) = \mathbb{P}(X_k = i_k) = \sum_{j\in \mathcal{I}_k} \sum_{i_j=0}^\infty p(i_1, \dots, i_J), 
\quad i_k\in \Z_+. $$
The probabilities of the $2$-dimensional projections
$$ p_{k, l}(i_k, i_l)=\PP(X_k=i_k, X_l=i_l), \quad k,l= 1,2,\dots, J, ~
k\neq l, ~ k,l\in \Z_+,$$
are computed as 
$$ p_{k, l}(i_k, i_l)= \sum_{j\in \mathcal{I}_{k,l}} \sum_{i_j=0}^\infty 
p(i_1, \dots, i_J). $$
Similarly, the objective functions, $h_{k,l}(p)=\mathbb{E}[X_k X_l]$, take the form$$ 
h_{k,l}(p)= \sum_{i_k=1}^\infty \sum_{i_l=1}^\infty i_k i_l p_{k,l}(i_k, i_l), 
\quad 1\le k<l\le J. 
$$
The optimization problem can then be written as
\begin{align}
    &   h_{k,l}(p)  \rightarrow  \,\, \textrm{extr} \quad  1\leq k<l \leq \dimensionofmpp,  \label{eq:multi_d_prob}\\
    \text{subject to} \nonumber \\
    & \sum_{j\in \mathcal{I}_k} \sum_{i_j=0}^\infty p(i_1, \dots, i_J), = P_k(i_k)  
    \quad i_k \in \Z_+, \quad k=1,\dots,J  \nonumber\\
    &  \MMhigherdimenprobdistr \geq 0 \nonumber
\end{align}
where $P_j(\cdot)$ are given marginal probabilities $(j=1, 2, \dots, J)$.
\subsection*{The main theorem}
Let us now formulate the main result of the paper.  It is convenient to introduce
the following notation.
\begin{equation}
        \tilde{F}_j(i_j, e_j) =
        \begin{cases}
            F_j(i_j) \quad & \text{if} \,\,\, e_j = 0 \\
            1 - F_j(i_j) \quad & \text{if} \,\,\, e_j = 1
        \end{cases} 
\end{equation}
where the marginal distributions, $F_j(\cdot)$, satisfy
\begin{equation*}
F_j(i_l) = \sum_{i_k=0}^{i_l} P_j(i_k) 
\end{equation*}

\begin{theorem}[Extreme Joint Distributions in Higher Dimensions]
    \label{thm:ejd_nd}
    Given marginal distributions $F_1$, $F_2$, $\dots F_J$ on $\Z_+$ and 
    a binary vector $\vec{e^n}$,  the extreme measure  with the monotonicity
    structure $\vec{e^n}$ is defined by the probabilities        
    \begin{align}
        p^{\vec{e^n}}(\vec{i}) = \big[ \min(\tilde{F}_1 & (i_1- e_{1};e_{1}),\dots,
        \tilde{F}_J(i_J-e_{\dimensionofmpp};e_{\dimensionofmpp})) \label{eq:ejd_neq_dim} \\
        & - \max(\tilde{F}_1(i_1+(e_{1}-1);e_{1}),\dots, \tilde{F}_J(i_J+        
        (e_{\dimensionofmpp}-1);e_{\dimensionofmpp})) \big]^+ \nonumber
    \end{align}
\end{theorem}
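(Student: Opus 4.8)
The plan is to reduce Theorem~\ref{thm:ejd_nd} to repeated application of the two-dimensional result, Theorem~\ref{thm:ejd_2d}, exploiting the structure imposed by Definition~\ref{def:extr_meas_gen}. The crucial observation is that the monotonicity vector $\vec{e^n}$ encodes, for each coordinate $j$, whether $X_j$ is comonotone ($e_j=0$) or antimonotone ($e_j=1$) with the reference variable $X_1$; and that, by the transitivity of comonotonicity/antimonotonicity of supports in $\R$, this fixes the monotonicity class of \emph{every} pair $(X_k,X_l)$. Concretely, $(X_k,X_l)$ is comonotone if $e_k=e_l$ and antimonotone if $e_k\neq e_l$. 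So first I would prove this consistency statement: given $\vec{e^n}$, all $M=J(J-1)/2$ pairwise monotonicity requirements are determined, and they are mutually compatible (there is no obstruction analogous to a ``frustrated'' correlation sign pattern), which is exactly why Lemma~\ref{lem:numb_ed} gives $2^{J-1}$ and not fewer measures.

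Next I would construct a candidate joint measure directly and verify it has the claimed form. The natural construction uses a single uniform (or, equivalently, a single $[0,1]$-valued) randomization: set $X_j = F_j^{-1}(U)$ when $e_j=0$ and $X_j = F_j^{-1}(1-U)$ when $e_j=1$, with $U\sim\mathrm{Unif}[0,1]$ and $F_j^{-1}$ the generalized inverse. This manifestly produces the right marginals, and every 2-dimensional projection $(X_k,X_l)$ is then the Fréchet--Hoeffding comonotone coupling (if $e_k=e_l$) or antimonotone coupling (if $e_k\neq e_l$) of $F_k,F_l$; hence by Theorem~\ref{thm:ejd_2d} each projection is an extreme measure in the sense of Definition~\ref{def:chiu_extrem_measure_2d}, so by Definition~\ref{def:extr_meas_gen} the joint law is an extreme measure with monotonicity structure $\vec{e^n}$. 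It remains to compute the probability mass function of this measure on the lattice. The event $\{X_1=i_1,\dots,X_J=i_J\}$ pulls back under the construction to an interval of $U$-values: for each $j$, $X_j=i_j$ corresponds to $U$ lying in an interval with endpoints expressed through $F_j$ if $e_j=0$ and through $1-F_j$ if $e_j=1$ — precisely the quantities $\tilde F_j(i_j-e_j;e_j)$ and $\tilde F_j(i_j+(e_j-1);e_j)$ after one checks the endpoint bookkeeping (the off-by-one shift differs between the comonotone and antimonotone cases, which is why the shift is written as $-e_j$ on one side and $e_j-1$ on the other). Intersecting these $J$ intervals gives an interval whose length is the bracketed $[\min(\cdots)-\max(\cdots)]^+$ in \eqref{eq:ejd_neq_dim}, which is the asserted formula.

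Finally I would argue uniqueness: for a fixed $\vec{e^n}$, the extreme measure is unique. This follows because each pairwise projection is forced to be the (unique) Fréchet extreme coupling of its two marginals, and — the one genuinely $J$-dimensional point — the full joint law is then pinned down by its bivariate margins together with the monotone structure, since all $X_j$ are deterministic monotone functions of the common ``driver'' $U$ and hence of each other. I expect the main obstacle to be this last coherence step, i.e.\ showing that forcing all 2-dimensional projections to be extreme actually forces the $J$-dimensional law (rather than merely constraining it), and correspondingly checking that the construction above is the \emph{only} joint law whose projections are simultaneously the prescribed comonotone/antimonotone couplings; the endpoint/index-shift verification in the pmf computation is routine but must be done carefully to get \eqref{eq:ejd_neq_dim} exactly as stated, including the degenerate cases where some $F_j$ has atoms or where $i_j=0$ with the conventions $F_j(-1)=0$, $\bar F_j(-1)=1$.
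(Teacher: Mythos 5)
Your proposal is correct and is essentially the paper's argument in a cleaner guise: the paper generates a large sample from $p^{\vec{e}}$, sorts it by the first coordinate, and identifies the event $\{\vec{X}=\vec{i}\}$ with an intersection of index intervals whose normalized endpoints converge a.s.\ to exactly your quantities $\tilde F_j(i_j-e_j;e_j)$ and $\tilde F_j(i_j+(e_j-1);e_j)$ — i.e., the empirical, finite-$N$ version of your quantile transform $X_j=F_j^{-1}(U)$ or $F_j^{-1}(1-U)$ of a common uniform $U$, with the same $[\min-\max]^+$ interval-intersection computation. The uniqueness/coherence issue you flag (pairwise extremality pinning down the full joint law) is not addressed in the paper's sketch either, so your treatment is, if anything, slightly more complete on the existence/construction side.
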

\begin{proof} 
We give a sketch of the proof here for the general case $J \geq 2$.
A more complete proof for the case $J=2$ is given in \cite{kreinin}.

Let us first show that, if $J=2$, then  Equation~(\ref{eq:ejd_neq_dim}) is equivalent to (\ref{eq:mr_J}), in the case of maximal correlation, and to (\ref{eq:max_negat}), in the case of minimal correlation. Indeed, in the first case, the distributions of $X_1$ and $X_2$ must be comonotone. Hence, $e_1=e_2=0$ and $\tilde F_k(i, e_k)= F_k(i)$ for $k=1$ and 2 and all $i \geq 0$. In the antimonotone case, $e_1 = 0$, but $e_2=1$. Thus, $\tilde F_1(i,e_1) = F_1(i)$, but $\tilde F_2(i, e_2)= 1-F_2(i-1)$ for all $i \geq 0$.  
Therefore, Equation~(\ref{eq:ejd_neq_dim}) is equivalent to (\ref{eq:mr_J}) and (\ref{eq:max_negat}).

Let us now consider the general case, $J\ge 3$. There are two groups of the coordinates of $\vec{X}$:  comonotone and antimonotone. Denote their indices by $$ \mathcal{I_C} = \{j: e_j=0\}\,\,\, \text{and}\,\,\,  \mathcal{I_A} = \{j: e_j=1\}. $$
Let us now generate  a large sample from  the distribution  $p^{\vec{e}}$ and sort them in the ascending order with respect to the first coordinate. It was shown in \cite{kreinin} that, after sorting, the comonotone coordinates of $\vec{X}$ will be  permuted in the ascending order while the  antimonotone coordinates will be permuted in the descending order.

Suppose that the indices  $1 = k_1 < k_2< k_3<\dots<k_C$ belong to $ \mathcal{I_C}$ and the complimentary set of indices is  $\mathcal{I_A} =\{l_1, l_2, \dots, l_A\}$. A permuted sample is represented in (\ref{chiu_eq:sample_vectors_marginals}). 
\begin{align}
    X_1 &: \overbrace{0,\dots,0}^{N_{1}(0)},\dots,\overbrace{i-1,\dots,i-1}^{N_{1}(i-1)},\,\,\,\overbrace{i,i,\dots,i}^{N_{1}(i)},\dots\overbrace{k,k\dots,k}^{N_{1}(k)},\dots \nonumber \\
    X_{\,k_2} &: \underbrace{0,0,\dots,0}_{N_{k_2}(0)},\dots,
    \underbrace{i-1,\dots,i-1}_{N_{k_2}(i-1)},\,\,\underbrace{i,\dots,i}_{N_{k_2}(i)},\,\,\dots, \nonumber \\
    & \hspace{4.25cm} \vdots \label{chiu_eq:sample_vectors_marginals} \\
        X_{\,l_A} & : \dots\underbrace{k,k,\dots,k,}_{N_{\,l_A}(k)}\underbrace{k-1,\dots,k-1}_{N_{\,l_A}(k-1)},\dots\underbrace{2,2,2,\dots2}_{N_{\,l_A}(2)},\dots \nonumber
\end{align}
where $N_k(m)$ denotes the number of realizations of $m$ in the  $k$th coordinate, $X_k$, of  $\vec{X}$.   
The first position, $I_{k}^{\,C}(m)$, where the number $m$ appears in the sorted sample  of the r.v. $X_{k}$ is
$$I_{k}^{\,C}(m)= 1 + \sum_{i=0}^{m-1} N_{k}(i), \quad k \in \mathcal{I_C}. $$
The last position, $E_{k}^{\,C}(m)$, where the number $m$ appears in the sorted sample  of the   r.v. $X_{k}$ is 
$$E_{k}^{\,C}(m)=\sum_{i=0}^{m} N_{k}(i), \quad k\in \mathcal{I_C}. $$
As the sample size $N_S\to\infty$, we have
\begin{equation}
  \lim_{N_S\to\infty}  \frac{N_k(m)}{N_S} = p_k(m) \quad \textbf{a.s.} \label{eq_Nkm_as}
\end{equation}
Therefore,  for  $k\in \mathcal{I_C}$
\begin{equation}
  \lim_{N_S\to\infty}  \frac{I_k^{\,C}(m)}{N_S} = F_k(m-1) \quad \textbf{a.s.}. \label{eq_Ikm_as}
\end{equation}
and
\begin{equation}
  \lim_{N_S\to\infty}  \frac{E_k^{\,C}(m)}{N_S} = F_k(m) \quad \textbf{a.s.}. \label{eq_Ekm_as}
\end{equation}
In the case of the group of antimonotone coordinates, $l\in \mathcal{I_A}$,
the first index, $I_{l}^{\,A}(m)$, where a number $m$ appears in the sorted sample of the r.v. $X_{l}$ is 
$$I_{l}^{\,A}(m)= 1 + N_S - \sum_{i=0}^{m} N_{l}(i), \quad l \in \mathcal{I_A}. $$
The last position, $E_{l}^{\,A}(m)$, where a number $m$ appears in the sorted sample  of the  r.v. $X_{l}$ is 
$$E_{l}^{\,A}(m)= N_S - \sum_{i=0}^{m-1} N_{l}(i), \quad l\in \mathcal{I_A}. $$
As $N_S\to\infty$, we have for $l\in \mathcal{I_A}$
\begin{equation}
  \lim_{N_S\to\infty}  \frac{I_l^{\,A}(m)}{N_S} = 1- F_l(m) \quad \textbf{a.s.}. \label{eq_Ilm_as}
\end{equation}
and
\begin{equation}
  \lim_{N_S\to\infty}  \frac{E_l^{\,A}(m)}{N_S} = 1-F_l(m-1) \quad \textbf{a.s.}. \label{eq_Elm_as}
The empirical measure of the event 
\end{equation} 
$$ \{\vec{X}=\vec{i}\} =\Big\{ \bigcap_{k\in \mathcal{I_C}}  
\{X_k=i_k\} \Big\} \, \bigcap 
\Big\{ \bigcap_{l\in \mathcal{I_A}}  \{X_l=i_l\}\Big\} $$
is $\mathbf{m_{N_S}}(\{\vec{X}=\vec{i}\})$, which coincides with that of the intersection of the intervals
$$ \Big\{\bigcap_{k\in \mathcal{I_C}} [I_{k}^{\, C}(i_k), E_{k}^{\,C}(i_k)]\Big\} \bigcap  
 \Big\{\bigcap_{l\in \mathcal{I_A}} [I_{l}^{\,A}(i_l), E_{l}^{\,A}(i_l)]\Big\} $$
 The latter can be written as  follows. The right end of the intersection of the intervals is
 $$ \mathcal{R}=\min\Big( \min_{k\in \mathcal{I_C}}( E_{k}^{\,C}(i_k) ),  
               \min_{l\in \mathcal{I_A}}( E_{l}^{\,C}(i_l) ) \Big) $$
             and  the left end is
 $$ \mathcal{L} =   \max\Big(  \max_{k\in \mathcal{I_C}} (I_{k}^{\,C}(i_k)),  \max_{l\in \mathcal{I_A}} (I_{l}^{\,C}(i_l)) \Big)$$
 Then we obtain
\begin{eqnarray*} 
& &\mathbf{\mu_{N_S}}(\{\vec{X}=\vec{i}\} = \frac{(\mathcal{R}-\mathcal{L})^+}{N_S}
\label{eq_edf}       
\end{eqnarray*}
Note that the length of the intersection of intervals is $0$ in the case $\mathcal{R}\le \mathcal{L}$. 
As $N_S\to\infty$, we obtain from Equations (\ref{eq_Ikm_as})--(\ref{eq_Elm_as}) 
\begin{align*} 
\lim_{N_S\to\infty}  \mathbf{\mu_{N_S}}(\{\vec{X}=\vec{i}\} = 
     \big[ \min( & \tilde{F}_1 (i_1- e_{1};e_{1}),\dots,
        \tilde{F}_J(i_J-e_{J};e_{J}))  \\
        - & \max(\tilde{F}_1(i_1+(e_{1}-1);e_{1}),\dots, \tilde{F}_J(i_J+        
        (e_{J}-1)\big]^+.
\end{align*}
Finally, note
$$ \lim_{\, N_S\to\infty} \mathbf{\mu_{\,N_S}}(\{\vec{X}=\vec{i}\}=p^{\,\vec{e}}(\vec{X}=\vec{i}) \quad \textbf{a.s.}$$
Thus  (\ref{eq:ejd_neq_dim}) is derived and the theorem is proved.
\end{proof}

\section{EJD Algorithm in Higher Dimensions}\label{sec:algor}
\subsubsection*{Approximation of Extreme Distributions}
In practice, the marginal distributions $F_j(k), \, (j=1,\dots,J)$ must be truncated, i.e., approximated by distributions $ \tilde{F}_j(k)$ with finite support, $k\in [0, I_*]$,
such that 
\begin{equation*}
\max_{i\leq I_*} | F_j(i) - \tilde{F}_j(i) | \leq \epsilon, \quad 1 - F_j({I_*}) \leq \varepsilon, \quad \text{and for $k>I_*, \tilde{F}_j(k)=1$,}
\end{equation*}
where $F_j(n) = \sum^n_{i=0}\, p_j(i)$ and $\tilde F_j(n) = \sum^n_{i=0}\, \tilde p_j(i)$. It follows from Theorem (\ref{thm:ejd_nd}) that $\tilde{p}^{\,\vec{e^n}}(\vec{i})$ satisfies
\begin{equation}
	\sup_{i_1 \geq  0,\dots, i_J \geq 0} | \, p^{\,\vec{e^n}}(\vec{i}) - \tilde{p}^{\,\vec{e^n}}(\vec{i})\,| \leq \varepsilon. \label{ineq_p}
\end{equation}
Moreover, if the second moments of the marginal distributions are finite then for any pair of indices, $l$ and $m$, $(1\le  l\le m\le J)$, the covariance 
$\cov(X_l, X_m)$ will also be approximated
\begin{equation}
	\sup_{l, m }\Big\vert  \, \sum i_ l i_m  \Bigl( p^{\,\vec{e^n}}(\vec{i}) - \tilde{p}^{\,\vec{e^n}}(\vec{i})\Bigr) \, \Big\vert \leq 3\varepsilon.
	\label{ineq_cpv}
\end{equation}

Inequalities (\ref{ineq_p}) and (\ref{ineq_cpv}) were derived in \cite{kreinin}, in the case $J=2$, where we also explained how to choose $I_*$ given $\varepsilon$ and the second moment of the marginal distributions. The same line of arguments from \cite{kreinin} can easily be extended to the general case $J\ge 3$. These inequalities are used in the numerical example illustrating the computation of the joint probabilities of the $3$-dimensional Poisson process.

Let us now describe the Extreme Joint Distribution (EJD) algorithm, an efficient algorithm for the computation of the probabilities $p^{\vec{e^n}}(\vec{i})$ for $J\geq2$. A simpler version of this algorithm was given in \cite{kreinin} for $J=2$. The preliminary step, the truncation of the marginal distributions by distributions with finite support is identical to that in \cite{kreinin}. The main step is the recursive computation of the probabilities $p^{\,\vec{e^n}}(\vec{i})$, which can be done as described in the algorithm below. Note that, in the algorithm, $p^{\vec{e}}(\vec{x})$ is assigned a value (in Step 5) only if $\vec{x}$ is in the support of $p^{\vec{e}}$ and the support point $\vec{x}$ is saved (in Step 3).  If $\vec{x}$ is not a saved support point (i.e., not saved in Step 3), then $p^{\vec{e}}(\vec{x}) = 0$. To simplify the description of the algorithm below, we assume that all the marginal probabilities are positive.
 
\begin{tabbing}
	\hspace*{1.2cm}\= \kill
	Step 0a.	\> Set $k=0$  \\
	Step 0b.	\> 	For each $j=1:J$ \\
				\>	\quad If $e_{j} = 1$, 	\\
				\> 	\quad\quad\quad Set $F_j(i) = 1 - F_j(i)$ \\
				\> 	\quad\quad\quad Set $\Delta_j$ = -1 \quad and \quad $x_j^0 = \max \{i : P_j(i)\geq 0 \}$	\\
				\> \quad else \\
				\> \quad\quad\quad Set $\Delta_j$ = 1 \quad and \quad $x_j^0 = 0$	\\
	Step 0c. 	\> Set $z_0 = \min(F_1(0),\dots,F_\dimensionofmpp(0))$ \quad  and \quad $p^{\, \vec{e}}(x_1^0,\dots,x_J^0)=z_0$  \\
	Step 1. 	\> Set $k=k+1$ \\
	Step 2.		\> For each $j = 1:J$ \\
				\> \quad\quad If $z_{k-1} = F_j(i_j)$ for some $i_j$, \\
				\> \quad\quad\quad Set $x_j^k = i_j + \Delta_j$ \\
				\> \quad\quad else \\
				\> \quad\quad\quad Set $x_j^k = x_j^{k-1}$ \\
	Step 3. 	\> Save the $k$-th support point $\vec{x}_k = (x_1^k,\dots,x_\dimensionofmpp^k)$ \\
		Step 4. 	\> Set $z_k = \min(F_1(x_1^k),\dots,F_\dimensionofmpp(x_\dimensionofmpp^k))$ \\
	Step 5. 	\> Set $p^{\,\vec{e}}(x_1^k,\dots,x_\dimensionofmpp^k) = z_k - z_{k-1}$ \\
	Step 6. 	\> Go to Step 1
\end{tabbing}

\subsubsection*{Numerical Example}
We consider an example illustrating the computation of extreme measures with Poisson marginal distributions in the case  $J=3$. 
We explore their support, joint-probabilities, and resulting correlations. Henceforth, we shall refer to the extreme measures of a $3$-dimensional  Poisson process with intensities 
$\boldsymbol\mu = (\mu_1, \mu_2,\mu_3) = (3,5,7)$ as the ``extreme measure example''. 
We note that the tolerance level for the marginal distributions is $\varepsilon=0.01$.

We begin with the support of the distributions. As in the case $J=2$, the  support of an extreme measure looks like a staircase and is sparse. Figure \ref{chiu_fig:multi_d_support} illustrates the supports of all four extreme measures of the example,  where the associated monotonicity structures of the extreme measures are $\vec{e}^{\, 1}$, $\vec{e}^{\, 2}$, $\vec{e}^{\, 3}$, $\vec{e}^{\, 4}$:
\begin{enumerate}
	\item $\vec{e}^{\, 1} = (0,0,0)$ corresponds to the extreme measure in which all component exhibit extreme positive correlation
	\item $\vec{e}^{\, 2} = (0,1,0)$ corresponds to the extreme measure in which the second component has extreme negative correlation with the other coordinates
	\item $\vec{e}^{\, 3} = (0,0,1)$ corresponds to the extreme measure in which the third component has extreme negative correlation with the other coordinates
	\item $\vec{e}^{\, 4} = (0,1,1)$ corresponds to the extreme measure in which the first component has extreme negative correlation with the other coordinates
\end{enumerate}

Recall that the number of extreme measures for a given dimension $J$ is $N = 2^{J-1} = 4$ in this case (Lemma \ref{lem:numb_ed}). We also refer to extreme measures as extreme points. We display the N=4 extreme measures in blue in Figure 2. To highlight the monotonicity of the support of each extreme measure, we also show in Figure 2, its 2-dimensional projections onto the x-y, x-z and y-z planes.

\begin{figure}[H]
	\hspace{-1.5cm}\includegraphics[width=1.3\linewidth, height=10cm]{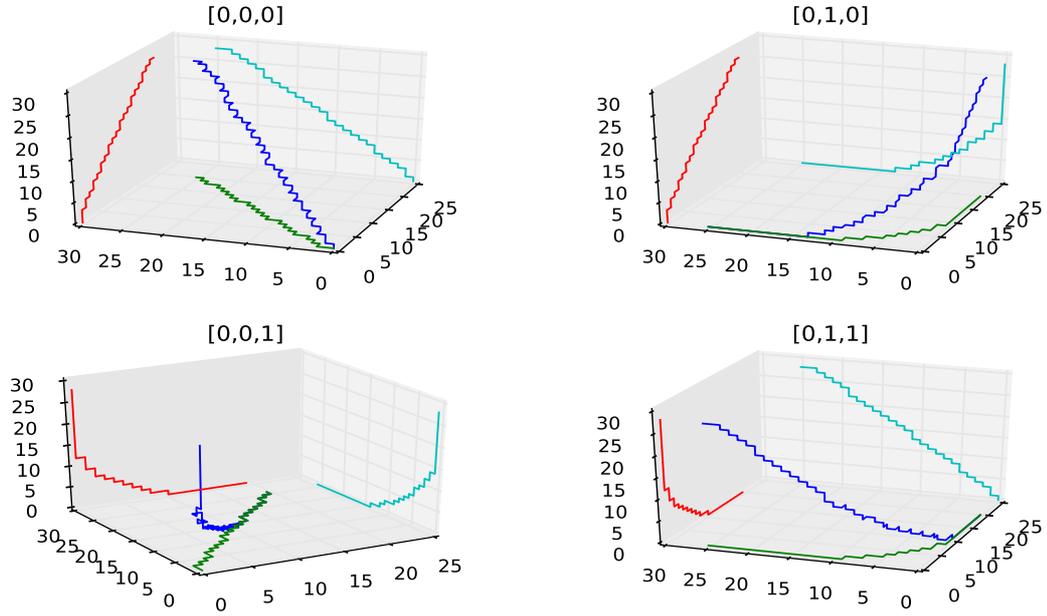}
	\captionsetup{width=0.85\textwidth}
	\caption{The blue curve in each graph is the support of an extreme measure in the case $J=3$, with Poisson marginal distributions. The red, teal, and green curves represent the projection of the 3D support onto the x-y, x-z, and y-z planes. These four graphs completely describe the support of the extreme measures in the case $J=3$.}
	\label{chiu_fig:multi_d_support}
\end{figure}

The resulting extreme correlation matrices are as follows:
\begin{equation*}
{\mathbf C^{\vec{e}^{\, 1}}}=
\begin{pmatrix}
1.0 & 0.93688 & 0.931861  \\
0.93688 & 1.0 & 0.967188 \\
0.931861 & 0.967188& 1.0 \\
\end{pmatrix}
\end{equation*}
\begin{equation*}
{\mathbf C^{\vec{e}^{\, 2}}}=
\begin{pmatrix}
1.0 & -0.81193 & 0.931861  \\
-0.81193 & 1.0 & -0.90135 \\
0.931861 & -0.90135 & 1.0 \\
\end{pmatrix}
\end{equation*}
\begin{equation*}
{\mathbf C^{\vec{e}^{\, 3}}}=
\begin{pmatrix}
1.0 & 0.93688 & -0.84624  \\
0.93688 & 1.0 & -0.90135 \\
-0.84624 & -0.90135 & 1.0 \\
\end{pmatrix}
\end{equation*}
\begin{equation*}
{\mathbf C^{\vec{e}^{\, 4}}}=
\begin{pmatrix}
1.0 & -0.81193 & -0.84624  \\
-0.81193 & 1.0 & 0.967188 \\
-0.84624 & 0.967188& 1.0 \\
\end{pmatrix}
\end{equation*}
where $C^{\vec{e}^{\, i}}$ is the  correlation matrix corresponding to the  monotonicity structure defined by the vector $\vec{e}^{\, i}, (i=1,2,3,4)$.

\begin{table}[H]
\centering
\begin{tabular}{|c|c|c|c|}
\hline
\multicolumn{2}{|c|}{Extreme Measure 1} & \multicolumn{2}{|c|}{Extreme Measure 2}  \\
\hline
$(i_1,i_2,i_3)$ & $p^{\vec{e}^{\, 1}}(i_1,i_2,i_3)$ & $(i_1,i_2,i_3)$ & $p^{\vec{e}^{\, 2}}(i_1,i_2,i_3)$\\
\hline
(0,0,0) & 0.0009 & (0,10,0) & 0.0000 \\
(0,0,1) & 0.0058 & (0,9,0) & 0.0002 \\
(0,1,1) & 0.0006 & (0,8,0) & 0.0009 \\
(0,1,2) & 0.0223 & (0,7,0) & 0.0034 \\
(0,1,3) & 0.0108 & (0,6,0) & 0.0120 \\
(0,2,3) & 0.0094 & (0,5,0) & 0.0332 \\
(1,2,3) & 0.0320 & (0,5,1) & 0.0029 \\
(1,2,4) & 0.0429 & (0,4,1) & 0.0902 \\
(1,3,4) & 0.0483 & (0,3,1) & 0.0563 \\
(1,3,5) & 0.0262 & (0,3,2) & 0.1242 \\
(2,3,5) & 0.0659 & (0,2,2) & 0.0446 \\
(2,4,5) & 0.0357 & (1,2,2) & 0.0553 \\
(2,4,6) & 0.1225 & (1,2,3) & 0.1708 \\
(3,4,6) & 0.0173 & (1,1,3) & 0.0532 \\
(3,5,6) & 0.0092 & (1,1,4) & 0.0885 \\
(3,5,7) & 0.1490 & (2,1,4) & 0.0795 \\
(3,5,8) & 0.0172 & (2,1,5) & 0.0494 \\
(3,6,8) & 0.0313 & (2,0,5) & 0.0514 \\
(4,6,8) & 0.0819 & (2,0,6) & 0.0036 \\
(4,6,9) & 0.0331 & (3,0,6) & 0.0468 \\
(4,7,9) & 0.0531 & (3,0,7) & 0.0145 \\
(5,7,9) & 0.0152 & (4,0,7) & 0.0071 \\
(5,7,10) & 0.0361 & (4,0,8) & 0.0081 \\
(5,8,10) & 0.0349 & (4,0,9) & 0.0001 \\
(5,8,11) & 0.0146 & (5,0,9) & 0.0026 \\
(6,8,11) & 0.0158 & (5,0,10) & 0.0005 \\
(6,9,11) & 0.0147 & (6,0,10) & 0.0003 \\
(6,9,12) & 0.0198 & (6,0,11) & 0.0002 \\
(7,9,12) & 0.0017 & (7,0,11) & 0.0000 \\
(7,10,12) & 0.0048 & (7,0,12) & 0.0001 \\
\hline
\end{tabular}
\captionsetup{width=0.85\textwidth}
\caption{Support and joint probabilities of the extreme measure corresponding to monotonicity structures $\vec{e}^{\, 1}$ and $\vec{e}^{\, 2}$}
\label{table:3d_ex_1}
\end{table}

\begin{table}[H]
\centering
\begin{tabular}{|c|c|c|c|}
\hline
\multicolumn{2}{|c|}{Extreme Measure 3} & \multicolumn{2}{|c|}{Extreme Measure 4}  \\
\hline
$(i_1,i_2,i_3)$ & $p^{\vec{e}^{\, 3}}(i_1,i_2,i_3)$ & $(i_1,i_2,i_3)$ & $p^{\vec{e}^{\, 4}}(i_1,i_2,i_3)$\\
\hline
(0,0,13) & 0.0000 & (0,10,13) & 0.0000 \\
(0,0,12) & 0.0001 & (0,10,12) & 0.0000 \\
(0,0,11) & 0.0002 & (0,9,12) & 0.0000 \\
(0,0,10) & 0.0008 & (0,9,11) & 0.0002 \\
(0,0,9) & 0.0027 & (0,8,11) & 0.0001 \\
(0,0,8) & 0.0081 & (0,8,10) & 0.0008 \\
(0,0,7) & 0.0216 & (0,7,10) & 0.0000 \\
(0,0,6) & 0.0504 & (0,7,9) & 0.0027 \\
(0,0,5) & 0.0514 & (0,7,8) & 0.0007 \\
(0,1,5) & 0.0494 & (0,6,8) & 0.0074 \\
(0,1,4) & 0.1680 & (0,6,7) & 0.0047 \\
(0,1,3) & 0.0151 & (0,5,7) & 0.0169 \\
(1,1,3) & 0.0381 & (0,5,6) & 0.0191 \\
(1,2,3) & 0.1708 & (0,4,6) & 0.0313 \\
(1,2,2) & 0.0999 & (0,4,5) & 0.0590 \\
(1,3,2) & 0.0591 & (0,3,5) & 0.0419 \\
(2,3,2) & 0.0651 & (0,3,4) & 0.1386 \\
(2,3,1) & 0.0563 & (0,2,4) & 0.0294 \\
(2,4,1) & 0.0626 & (0,2,3) & 0.0151 \\
(3,4,1) & 0.0276 & (1,2,3) & 0.2089 \\
(3,5,1) & 0.0029 & (1,2,2) & 0.0172 \\
(3,5,0) & 0.0308 & (1,1,2) & 0.1418 \\
(4,5,0) & 0.0024 & (2,1,2) & 0.0651 \\
(4,6,0) & 0.0120 & (2,1,1) & 0.0638 \\
(4,7,0) & 0.0009 & (2,0,1) & 0.0550 \\
(5,7,0) & 0.0026 & (3,0,1) & 0.0305 \\
(5,8,0) & 0.0005 & (3,0,0) & 0.0308 \\
(6,8,0) & 0.0004 & (4,0,0) & 0.0153 \\
(6,9,0) & 0.0002 & (5,0,0) & 0.0031 \\
(7,9,0) & 0.0000 & (6,0,0) & 0.0005 \\
\hline
\end{tabular}
\captionsetup{width=0.85\textwidth}
\caption{Extreme measures corresponding to monotonicity structures $\vec{e}^{\, 3}$ and $\vec{e}^{\, 4}$}
\label{table:3d_ex_2}
\end{table}

In Tables \ref{table:3d_ex_1} \& \ref{table:3d_ex_2}, we list the values of the joint probabilities $p^{\vec{e}}(\vec{i})$ for the extreme measures. Each table contains 2 of the 4  extreme measures. The columns are grouped such that they display the support and the corresponding joint probabilities corresponding to each example extreme measure. 

\section{Calibration of Correlations}
\label{sec:calibration}
In the case $J = 2$, given a correlation coefficient $\rho$ in the admissible correlation range $[\rho^*, \rho^{**}]$, we can use the following approach to find a probability measure $p$ having correlation $\rho$ and satisfying the marginal constraints. The approach is as follows. First find the unique $w \in [0,1]$ such that
\begin{equation*}
     \rho = w \rho^* + (1-w) \rho^{**}
\end{equation*}
Then set $p = w\, p^* + (1 - w)\, p^{**}$, where $p^*$ and $p^{**}$ are the extreme measures with correlations $\rho^*$ and $\rho^{**}$, respectively. Note that $p$ has correlation $\rho$ and that it also satisfies the marginal constraints, as it is a convex combination of $p^*$ and $p^{**}$, both of which also satisfy the marginal constraints. Note also that, if $\rho$ is not in the admissible correlation range $[\rho^*, \rho^{**}]$, then it cannot be the correlation of a probability measure $p$ satisfying the marginal constraints. 

If $J>2$, the same idea is applicable. However, we have instead, a system of equations with $N_w$ weights to solve for a given correlation matrix
\begin{equation}
\mathcal{C}_g	 = w_1\,\mathcal{C}_1 + \dots + w_{N_w}\,\mathcal{C}_{N_w},
	\label{eq:calibration_n_dim_corr}
\end{equation}
where the $\mathcal{C}_n$ are correlation matrices associated with the extreme distributions, $w_n \geq 0$ and $\sum_{n=1}^{N_w} w_n = 1$. Taking the extreme measures with the same set of marginal distributions, we construct the convex combination 
\begin{equation}
	\label{eq:calibration_n_dim_measure}
	p^w = w_1 p^{e_1} \, + \dots + w_{N_w} p^{e_{N_w}}
\end{equation}
where $p^w$ has correlation matrix $\mathcal{C}_g$ and satisfies the marginal constraints. The calibration problem is now reduced to finding a minimal $N_w$ to form a convex combination of extreme measures. Indeed, the number of extreme measures is $2^{J-1}$ and the number of correlation coefficients is $M = J(J-1)/2$. In matrix form (\ref{eq:calibration_n_dim_corr}) can be written as
\begin{equation}
	Aw = \hat{\mathcal{C}}_g \\
\end{equation}
where $A$ is of dimension $\dimensionofcorrelationspace$-by-$\numberextremepoints$, the $i^{\mathrm{th}}$ column of $A$ is a vectorized version of the upper triangular part of the extreme correlation matrix $\mathcal{C}_i$ and
$\hat{\mathcal{C}}_g$ is a vectorized version of the matrix $\mathcal{C}_g$. As the dimensionality of the multivariate Poisson process $J$ increases, $A$ becomes increasingly underdetermined. To find the weights, $w_j$, one can solve the following constrained system of equations
\begin{align}\label{eq:calibration_problem}
	& Aw = \hat{\mathcal{C}}_g  	\\
	& \textbf{1}^T w = 1 \nonumber \\
	& w_n \geq 0 \qquad  n=1, 2, \dots, N. \nonumber
\end{align}
An approach to solving (\ref{eq:calibration_problem}) is outlined on pages 376-379 of \cite{nocedal}. If (\ref{eq:calibration_problem}) does not have a solution, this implies that the correlation matrix $\mathcal{C}_g$ cannot be generated from a multivariate Poisson process with the prescribed marginal distributions. Once we have found a $w$ satisfying the constraints (\ref{eq:calibration_problem}), we can reduce the number of nonzero components in $w$ to $N_w \leq M+1$ using, for example a technique similar to that often used in the proof of Carath\'{e}odory's theorem, to obtain a vector of $N_w$ nonzero weights satisfying (\ref{eq:calibration_n_dim_corr}) and the positivity constraints on $w$.

A matrix $\mathcal{C}$ is called admissible if it is a symmetric, positive semi-definite (PSD) matrix with ones on the diagonal and each entry satisfies $\rho_{ij}^* \leq c_{ij} \leq \rho_{ij}^{**}$, where
$\rho_{ij}^*$ and $ \rho_{ij}^{**}$ are extreme correlations for the $2$-dimensional problem for $(X_i, X_j)$. 
Notice that the correlation matrices corresponding to the extreme measures are admissible.
\begin{theorem}
	\label{thm:convex_combo_valid_corr_matrix}
	A convex combination of admissible correlation matrices  is also an admissible correlation matrix.
\end{theorem}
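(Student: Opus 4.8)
The plan is to verify directly that each of the four defining properties of an admissible matrix --- symmetry, unit diagonal, positive semidefiniteness, and the entrywise bounds $\rho_{ij}^* \le c_{ij} \le \rho_{ij}^{**}$ --- is individually preserved under convex combinations. Since each of these conditions is either a linear equality, a linear inequality, or membership in the convex cone of PSD matrices, the set of admissible correlation matrices is convex, and the claim follows at once. The whole argument is therefore routine; I include it only for completeness.

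Concretely, let $\mathcal{C}^{(1)},\dots,\mathcal{C}^{(m)}$ be admissible correlation matrices, let $w_1,\dots,w_m \ge 0$ with $\sum_{n=1}^m w_n = 1$, and set $\mathcal{C} = \sum_{n=1}^m w_n\,\mathcal{C}^{(n)}$. First, $\mathcal{C}$ is symmetric because each $\mathcal{C}^{(n)}$ is symmetric and a nonnegative linear combination of symmetric matrices is symmetric. Second, its diagonal entries satisfy $c_{ii} = \sum_n w_n\, c_{ii}^{(n)} = \sum_n w_n = 1$, since every $\mathcal{C}^{(n)}$ has unit diagonal. Third, for any pair $i<j$, each $\mathcal{C}^{(n)}$ satisfies $\rho_{ij}^* \le c_{ij}^{(n)} \le \rho_{ij}^{**}$; because the weights are nonnegative and sum to one, the convex combination $c_{ij} = \sum_n w_n\, c_{ij}^{(n)}$ again lies in the interval $[\rho_{ij}^*,\rho_{ij}^{**}]$ by convexity of that interval.

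Fourth, for positive semidefiniteness, take any vector $v$; then $v^\top \mathcal{C}\, v = \sum_n w_n\, v^\top \mathcal{C}^{(n)} v \ge 0$ since each summand is a nonnegative multiple of a nonnegative quantity. (Equivalently, the cone of PSD matrices is closed under nonnegative linear combinations.) Combining the four points, $\mathcal{C}$ is symmetric, PSD, has unit diagonal, and satisfies the required entrywise bounds, hence is admissible. I do not anticipate any genuine obstacle: the only substantive remark is that admissibility is, by its very definition, the conjunction of affine constraints together with a convex-cone constraint, so the admissible set is convex and the theorem is immediate.
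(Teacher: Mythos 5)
Your proof is correct and follows essentially the same route as the paper, which likewise observes that positive semidefiniteness and the entrywise correlation bounds are each preserved under convex combinations; you simply spell out the symmetry and unit-diagonal checks that the paper leaves implicit. No issues.
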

\begin{proof}
	This fact readily follows from the observation that a convex combination of
    PSD matrices is a PSD matrix and, if all the matrices satisfy the correlation constraints so will the the convex combination of matrices.
\end{proof}
The probabilities $p^{\vec{e}}(\vec{i} )$ describing the extreme measures and their supports are  very different from  the  case of independent r.v.'s $X_j$. In particular, if $\rho=0$, the support of the  measure $p^w$ is the union of the supports of  $p^{\vec{e}}$. By adding an additional edge point $p^0$ corresponding to the case of independent components of $\vec{X}$, one can obtain a more general solution. We do not discuss this problem further in this paper.

\subsection*{Example Calibration}
We continue with the example extreme measure (c.f. Section \ref{sec:algor}) and attempt to calibrate to a target correlation matrix, $C_\ast$, given by
 \begin{equation*}
C_{\ast}= 
\begin{pmatrix}
1.0 & -0.8 & -0.5  \\
-0.8 & 1.0 & 0.5 \\
-0.5 & 0.5& 1.0 \\
\end{pmatrix}
\end{equation*}
Recall that in our example  $J = 3$, the Poisson marginal distributions have intensities $\boldsymbol\mu = (\mu_1,\mu_2,\mu_3) = (3,5,7)$ and $N = 2^{J-1} = 4$ extreme points. In this case, the constrained system corresponding to  can be constructed from the unique entries of the correlation matrix corresponding to each extreme point of the example extreme measure given in Section \ref{sec:algor} and takes the following form:
\begin{equation*}
\begin{pmatrix}
0.93688  & -0.81193 & 0.93688 & -0.81193 \\
0.931861 & 0.931861 & -0.84624 & -0.84624 \\
0.967188 & -0.90135 & -0.90135 & 0.967188 \\
1 & 1 & 1 & 1
\end{pmatrix}
\begin{pmatrix}
w_1 \\
w_2 \\
w_3 \\
w_4
\end{pmatrix}
=
\begin{pmatrix}
-0.8 \\
-0.5 \\
0.5 \\
1
\end{pmatrix}
\end{equation*}
A unique solution to this is $w$ = (0.0287993, 0.205588, 0.0436342, 0.721979). 
Now let
\begin{equation*}
p_* = w_1 \cdot p^{\vec{e}^1} 
        +  w_2 \cdot p^{\vec{e}^2}
        +  w_3 \cdot p^{\vec{e}^3}
        +  w_4 \cdot p^{\vec{e}^4}
\end{equation*}
where $p^{\vec{e}^i}$ is the extreme measure associate with the extreme correlation matrix $C^{\vec{e}^i}$,
$i = 1,2,3,4$, listed in Section \ref{sec:algor}.
Note that $p_*$ has correlation matrix $C_*$ and $p_*$ also satisfies the marginal constraints, since each of $p^{\vec{e}^i}$, $i = 1,2,3,4$, satisfies the marginal constraints.

\section{Simulation}
\label{sec:simulation}
Up until this point, we have discussed the computation of the multivariate Poisson distribution at some terminal time $T$ via the EJD algorithm. That allows us to achieve extreme correlations between the components of the multivariate Poisson process at  time $T$. We also obtain bounds on the elements of the admissible correlation matrix. The computation of the extreme measures allows us  to construct any admissible multivariate Poisson process.\footnote{That is a multivariate
Poisson process with correlations between the components satisfying the admissible correlation bounds.} 

We briefly discuss the BS approach, which allows us to simulate the  correlated multivariate Poisson processes on $[0,T]$ having an admissible correlation matrix at time $T$. Finally, we introduce the Forward continuation of the BS method. This extension allows us to construct sample paths of the multivariate Poisson process on the whole time axis.

\subsection*{Backward Simulation}
There are two general approaches to simulation of the sample paths of multivariate Poisson processes---a Forward approach and a Backward approach. Under the Forward simulation approach, the Frechet-Hoeffding theorem can be used to generate the inter-arrival times of the components. The correlation boundaries for the components of the multivariate Poisson process are tighter than the correlation boundaries attained using the BS approach. Furthermore, the time structure of correlation is richer in the Backward case. See \cite{kreinin} for a more detailed comparison.

The Backward approach relies on the conditional uniformity of the arrival times of the Poisson processes.
More precisely, the conditional distribution of the (unordered) arrival moments, $T_i$, of a Poisson process in the interval $[0,T]$, conditional on the 
number of events in the interval is uniform \cite{Feig}. 

The converse statement characterizing, the class of Poisson processes, is the foundation of the BS method  \cite{kreinin}. 
Consider a process $N_t, (t\ge 0)$ defined as
\begin{equation*}
N_t = \sum^{N_*}_{i=1} \mathbbm{1}(T_i \leq t), \quad 0 \leq t \leq T,
\end{equation*}
where $T_i$ are independent, identically distributed random variables uniformly distributed in the interval $[0, T]$.
Notice that $N_T = N_*$.

\begin{theorem}\label{thm_1d}
Let $N_*$ have a Poisson distribution with parameter $\lambda T$. Then $N_t$ is a Poisson process with intensity $\lambda$ in the interval $[0,T]$.
\end{theorem}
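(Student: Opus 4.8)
The plan is to verify directly that the process $N_t$ just constructed has the three defining properties of a Poisson process of intensity $\lambda$ on $[0,T]$: $N_0=0$, right-continuous nondecreasing integer-valued paths, and independent increments with $N_t-N_s\sim\mathrm{Poisson}(\lambda(t-s))$ for all $0\le s<t\le T$. Since any finite family of pairwise disjoint subintervals of $[0,T]$ can be refined into a single ordered partition, it suffices to fix an arbitrary partition $0=t_0<t_1<\dots<t_n=T$, set $\Delta_k=t_k-t_{k-1}$ and $M_k=N_{t_k}-N_{t_{k-1}}$, and show that $M_1,\dots,M_n$ are independent with $M_k\sim\mathrm{Poisson}(\lambda\Delta_k)$.

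First I would condition on $N_*=N$. By construction $M_k=\#\{\,i\le N:\,T_i\in(t_{k-1},t_k]\,\}$, and because the $T_i$ are i.i.d.\ uniform on $[0,T]$ and independent of $N_*$, the conditional law of $(M_1,\dots,M_n)$ given $N_*=N$ is multinomial with $N$ trials and cell probabilities $p_k=\Delta_k/T$ (the $T_i$ almost surely avoid the partition points, so there is no ambiguity at the endpoints). This multinomial allocation is the only probabilistic input; it is exactly the conditional-uniformity property invoked in the text, used here in the forward direction.

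The key step is then the Poissonization identity. For nonnegative integers $m_1,\dots,m_n$ with $m=\sum_k m_k$,
\begin{align*}
\PP(M_1=m_1,\dots,M_n=m_n)
&=\PP(N_*=m)\,\frac{m!}{m_1!\cdots m_n!}\prod_{k=1}^{n}p_k^{m_k}\\
&=e^{-\lambda T}\,\frac{(\lambda T)^m}{m!}\cdot\frac{m!}{\prod_k m_k!}\prod_{k=1}^{n}\Bigl(\frac{\Delta_k}{T}\Bigr)^{m_k}\\
&=\prod_{k=1}^{n}e^{-\lambda\Delta_k}\,\frac{(\lambda\Delta_k)^{m_k}}{m_k!},
\end{align*}
where the last equality uses $\sum_k\Delta_k=T$. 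The right-hand side factorizes into a product of $\mathrm{Poisson}(\lambda\Delta_k)$ probability mass functions, which simultaneously gives the independence of the increments and their marginal laws. Equivalently, one may compute the joint characteristic function: conditionally on $N_*$ it is $\bigl(\sum_k p_k e^{\mathrm{i}\theta_k}\bigr)^{N_*}$, and averaging against the Poisson law of $N_*$ produces $\exp\bigl(\sum_k\lambda\Delta_k(e^{\mathrm{i}\theta_k}-1)\bigr)$, the characteristic function of independent $\mathrm{Poisson}(\lambda\Delta_k)$ variables.

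Finally I would dispose of the path properties, which are immediate from the representation $N_t=\sum_{i=1}^{N_*}\mathbbm{1}(T_i\le t)$: the map $t\mapsto N_t$ is a finite sum of indicator step functions, hence nondecreasing, right-continuous and integer-valued, with $N_0=0$ almost surely because $\PP(T_i=0)=0$; and since the $T_i$ have a continuous distribution they are a.s.\ pairwise distinct, so each jump has size one (not needed for the increment characterization, but it confirms that $N_t$ is a genuine simple counting process). The only point requiring a little care is the reduction from arbitrary disjoint intervals to a common refining partition together with the remark that partition points are hit with probability zero; the substantive content is the one-line Poissonization algebra displayed above, so I do not anticipate a real obstacle.
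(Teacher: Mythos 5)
Your proof is correct and complete. The paper itself does not supply an argument for this theorem --- it defers to the reference \cite{kreinin} (with the conditional-uniformity direction attributed to \cite{Feig}) --- so there is no in-text proof to compare against; but your route is the standard one that such a proof takes: condition on $N_*$ to obtain a multinomial allocation of the $T_i$ over a partition $0=t_0<\dots<t_n=T$, then apply the Poissonization identity
\begin{equation*}
e^{-\lambda T}\frac{(\lambda T)^m}{m!}\cdot\frac{m!}{\prod_k m_k!}\prod_{k}\Bigl(\tfrac{\Delta_k}{T}\Bigr)^{m_k}=\prod_{k}e^{-\lambda\Delta_k}\frac{(\lambda\Delta_k)^{m_k}}{m_k!},
\end{equation*}
which yields independence and the Poisson law of the increments in one stroke. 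Your handling of the reduction from arbitrary disjoint intervals to a common refinement, the null probability of hitting partition points, and the path regularity (right-continuity, unit jumps a.s.) covers the remaining definitional requirements, so the argument stands on its own as a self-contained proof of the theorem.
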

 
 Let us now formulate the generalization of Theorem~\ref{thm_1d}.
Suppose that coordinates of the random vector
$$  N_\ast =\bigl( N^{(1)} , \dots, N^{(J)} \bigr) $$ 
have Poisson distribution, 
$N^{(j)}_\ast \sim\Pois(\lambda_j T)$. 
Denote the correlation coefficient of $N^{(i)}_\ast$ and $N^{(j)}_\ast$ by $\rho_{ij}$. 

\begin{theorem}
Consider the processes
$$ N^{(j)}_t = \sum_{i=1}^{N^{(j)}_\ast} {\mathbbm{1}}{\bigl(T_i^{(j)}\le t\bigr)}, \,\,j=1, 2, \dots, J, 
$$
where the random variables, $T_i^{(j)}, (i=1, 2, \dots, N^{(j)}_\ast)$,  are mutually independent,
uniformly distributed in the interval $[0, T]$.
Then $\mathbf{N_t} =\bigl( N^{(1)}_t , \dots, N^{(J)}_t \bigr) $ is  a multivariate Poisson processes in the interval $[0, T]$ and
\begin{equation}
    \corr(N^{(i)}_t, N^{(j)}_t) = \rho_{ij} t T^{-1}, \quad 0\le t\le T.
\label{eq_corr_t}
\end{equation}
\end{theorem}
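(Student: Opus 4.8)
The plan is to reduce the $J$-dimensional statement to the one-dimensional Theorem~\ref{thm_1d} coordinate-by-coordinate for the ``it is a multivariate Poisson process'' part, and then to compute the time-$t$ correlation directly from the conditional-uniformity construction. First I would observe that, for each fixed $j$, the process $N^{(j)}_t = \sum_{i=1}^{N^{(j)}_\ast} \mathbbm{1}(T_i^{(j)} \le t)$ is, by hypothesis, exactly the object to which Theorem~\ref{thm_1d} applies: $N^{(j)}_\ast \sim \Pois(\lambda_j T)$ and the $T_i^{(j)}$ are i.i.d.\ uniform on $[0,T]$ (and independent of $N^{(j)}_\ast$). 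Hence each coordinate $N^{(j)}_t$ is a Poisson process with intensity $\lambda_j$ on $[0,T]$, which is all that is required for $\mathbf{N_t}$ to be a multivariate Poisson process in the sense used in this paper (a vector process all of whose marginals are one-dimensional Poisson processes). No joint distributional assumption beyond the given correlations of $N^{(\cdot)}_\ast$ is needed for this part.

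Next I would compute $\corr(N^{(i)}_t, N^{(j)}_t)$. The key is to condition on $\mathbf{N_\ast} = (N^{(1)}_\ast,\dots,N^{(J)}_\ast)$. Given $N^{(i)}_\ast = m$ and $N^{(j)}_\ast = n$, the count $N^{(i)}_t$ is $\mathrm{Bin}(m, t/T)$ and $N^{(j)}_t$ is $\mathrm{Bin}(n, t/T)$, and — because the families $\{T_k^{(i)}\}$ and $\{T_k^{(j)}\}$ are mutually independent for $i \neq j$ — these two binomials are conditionally independent. Therefore $\mathbb{E}[N^{(i)}_t N^{(j)}_t \mid \mathbf{N_\ast}] = (t/T)^2 N^{(i)}_\ast N^{(j)}_\ast$, and $\mathbb{E}[N^{(j)}_t \mid \mathbf{N_\ast}] = (t/T) N^{(j)}_\ast$. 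Taking expectations and using $\mathbb{E}[N^{(j)}_\ast] = \lambda_j T$ gives
\begin{equation*}
\cov(N^{(i)}_t, N^{(j)}_t) = (t/T)^2 \cov(N^{(i)}_\ast, N^{(j)}_\ast).
\end{equation*}
For the variances, conditioning similarly and using the law of total variance with $\mathrm{Var}(\mathrm{Bin}(m,q)) = mq(1-q)$ yields $\mathrm{Var}(N^{(j)}_t) = (t/T)(1 - t/T)\lambda_j T + (t/T)^2 \mathrm{Var}(N^{(j)}_\ast)$; since $N^{(j)}_\ast$ is Poisson, $\mathrm{Var}(N^{(j)}_\ast) = \lambda_j T$, so this collapses to $\mathrm{Var}(N^{(j)}_t) = \lambda_j t$, exactly the variance of $\Pois(\lambda_j t)$ (consistent with the first part). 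Combining,
\begin{equation*}
\corr(N^{(i)}_t, N^{(j)}_t) = \frac{(t/T)^2 \cov(N^{(i)}_\ast, N^{(j)}_\ast)}{\sqrt{\lambda_i t}\sqrt{\lambda_j t}} = \frac{t}{T}\cdot\frac{\cov(N^{(i)}_\ast, N^{(j)}_\ast)}{\sqrt{\lambda_i T}\sqrt{\lambda_j T}} = \rho_{ij}\,\frac{t}{T},
\end{equation*}
which is (\ref{eq_corr_t}).

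The only genuinely delicate point — and the step I would expect to need the most care — is the conditional independence of $N^{(i)}_t$ and $N^{(j)}_t$ given $\mathbf{N_\ast}$. This is where the hypothesis that \emph{all} the $T_k^{(j)}$, across both the index $k$ and the coordinate $j$, are mutually independent is essential: the dependence between the two coordinate processes is injected \emph{only} through the terminal vector $\mathbf{N_\ast}$, and the uniform arrival-time mechanism adds independent ``thinning'' noise on each coordinate. I would state this explicitly and, if a fully rigorous treatment is wanted, justify it by noting that conditionally on $\{N^{(i)}_\ast = m, N^{(j)}_\ast = n\}$ the pair $(N^{(i)}_t, N^{(j)}_t)$ is a measurable function of the independent families $(T_1^{(i)},\dots,T_m^{(i)})$ and $(T_1^{(j)},\dots,T_n^{(j)})$ respectively, hence independent. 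Everything else is the routine moment bookkeeping sketched above, together with the already-established fact (first part) that each marginal is genuinely a Poisson process so that ``$\corr$'' is well defined on $[0,T]$.
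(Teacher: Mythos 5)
Your proposal is correct. The paper does not reproduce a proof of this theorem (it defers to \cite{kreinin}), but your argument --- each coordinate is a Poisson process by Theorem~\ref{thm_1d}, and the correlation follows by conditioning on $\mathbf{N_\ast}$, using the conditional $\mathrm{Bin}(N^{(j)}_\ast, t/T)$ thinning, conditional independence across coordinates given $\mathbf{N_\ast}$, and the law of total variance to get $\var(N^{(j)}_t)=\lambda_j t$ --- is exactly the standard derivation of the linear time structure of correlations that the cited reference establishes, and you correctly isolate the one delicate point (that the cross-coordinate dependence enters only through the terminal vector).
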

The proof can be found in \cite{kreinin}.
Let us now formulate the BS method:

\begin{itemize}
    \item[1.]Given a finite  vector of weights, $w_n$, $(n=1, 2, \dots, N_w)$, satisfying the conditions $w_n\ge 0$, $\sum_{n=1}^{N_w} w_n=1$, generate an index, $n$ by sampling from the probability distribution defined by $w$ to choose an extreme measure, $p^{\vec{e_n}}$.
    \item[2.] Generate a random vector $\mathbf{N_T}=(N_T(1), \dots N_T(J))$ from the extreme measure $p^{\, e_n}$.
    \item[3.] Generate arrival moments of the multivariate process $\mathbf{N_t}, (0\le t\le T)$. This can be accomplished via  
    straightforward simulation of the uniform distribution and ordering in the ascending order of the resulting samples  of the random variables $T_j$. 
\end{itemize}

\subsection*{Forward Continuation of the Backward Simulation}
The BS technique allows for the construction of sample paths of a multivariate Poisson process in an interval, $[0, T]$. In this section we consider an extension of the technique, which we call Forward-Backward simulation. We outline this approach for $J=2$.

Consider a sequence of time intervals $[0, T)$, $[T, 2T)$, $\dots, [mT, (m+1)T]$. Suppose that a bivariate Poisson process, $(X_t, Y_t)$, has already been simulated in the interval $[0, T)$ using the BS technique. For any $\tau$, $0\le\tau < T$, the increments $X_{T+\tau}- X_T$ are independent of $X_T$ and $Y_{T+\tau}-Y_T$ are independent of $Y_T$. Let us define the joint distribution of the increments as 
\begin{equation*}
	(X_{T+\tau} - X_T, Y_{T+\tau}- Y_T) \eqod (\hat X_\tau, \hat Y_\tau), \quad 0<\tau\le T,
\end{equation*}
where $\hat X_\tau$ and $\hat Y_\tau$ are independent versions of $X_t$ and $Y_t$, respectively, $\hat X_\tau \eqod X_t$ and $\hat Y_\tau \eqod Y_t$. Then we find $$\cov(X_{T+\tau}, Y_{T+\tau}) = \cov(X_{T}, Y_{T}) + \cov(X_{\tau}, Y_{\tau}).$$ Taking into account that
\begin{equation*}
	\cov(X_{\tau}, Y_{\tau})=\cov(X_T, Y_T) \cdot \frac{\tau^2}{T^2}, 
\end{equation*}
we obtain
\begin{equation*}
	\rho(T+\tau)=\rho(T) \frac{T^2+\tau^2 }{T(T+\tau) }. 
\end{equation*}
In particular, we have $\rho(2T)=\rho(T)$ and 
$\cov(X_{2T}, Y_{2T})=2\cov(X_{T}, Y_{T})$. 
Suppose now that $\rho(t)$ is defined for all $t\le nT$. 

Consider now the case $t=nT+\tau\in [nT, (n+1)T)$. We have
$$\cov(X_{nT}, Y_{nT})=n\cov(X_{T}, Y_{T})$$ 
and
\begin{equation*}
	\cov(X_{nT+\tau}, Y_{nT+\tau})=\cov(X_{T}, Y_{T})
   \cdot \Big(n +\frac{\tau^2}{T^2}\Big).
\end{equation*}
This latter relation implies
\begin{equation*}
	\rho(nT+\tau)=\rho(T) \frac{n+\tau^2\cdot T^{-2} }{n+\tau T^{-1} },
\end{equation*}
and we obtain asymptotic stationarity of the correlation coefficient:
$$ \lim_{n\to\infty} \rho(nT+\tau)=\rho(T) \quad \text{for all } \tau\in[0, T]. $$
Thus, the processes $X_t$ and $Y_t$ exhibit asymptotically stationary correlations as $t\rightarrow\infty$. 
An illustration of this is shown in Figure \ref{chiu_fig:fwd_cont}, where maximal (red line) and minimal
(blue line) values of
the correlation coefficient, $\corr(X_t, Y_t)$ are depicted.

It would be interesting to generalize this result for the class of mixed Poisson processes. The main difficulty is that the increments of the mixed Poisson processes are not independent.

\begin{figure}[H]
	\includegraphics[width=1\linewidth, height=9cm]{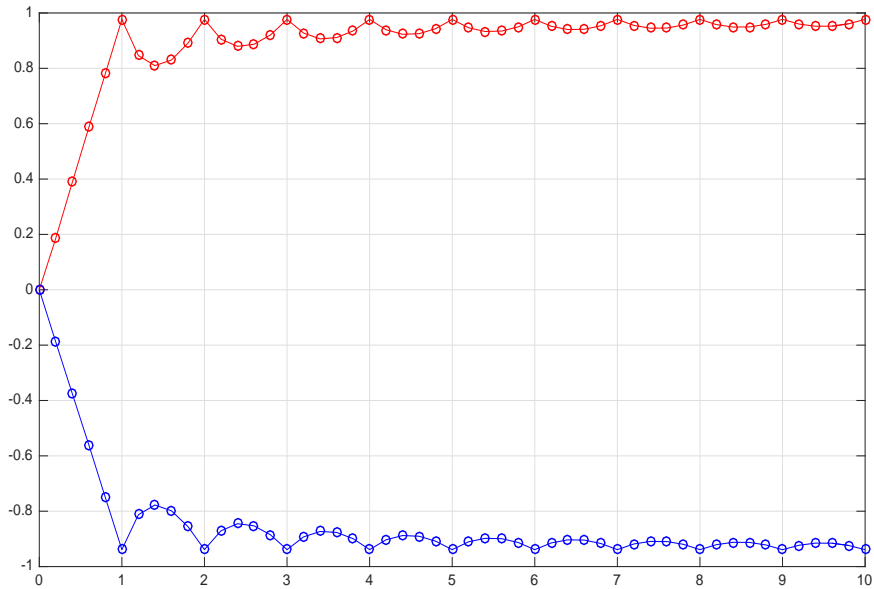}
	\caption{Forward Continuation of Backward Simulation: 
    $\corr(X_t, Y_t)$, $\mu_1 = 3$, $\mu_2 = 5$. }
	\label{chiu_fig:fwd_cont}
\end{figure}  	\section{Final remarks}
\label{sec:chiu_future_work}
We presented an approach to the solution to the problem of simulation of multivariate Poisson processes in the case the dimension of the 
problem is $J > 2$ and we described the admissible parameters for the calibration problem. We also
extended the BS approach  with the introduction of the Forward Continuation of BS.

There are several directions for future research. One is to extend the EJD approach to more general processes such as the Mixed Poisson processes and even to multivariate jump-diffusion processes. Another avenue of future research may be concerned with the efficient solutions of the multivariate calibration problem.  

It would also be interesting to study the interplay between the optimization problem and the EJD algorithm for 
computing the probabilities of the extreme measures and find the interpretation of this algorithm in terms of the Optimal transport problem. Exploring the synthesis of Forward and Backward simulation for more general processes is also worthwhile.

\end{document}